\theoremstyle{plain}
\newtheorem{theorem}{Theorem}
\newtheorem{proposition}{Proposition}
\theoremstyle{definition}
\newtheorem{definition}{Definition}
\theoremstyle{remark}
\newtheorem{example}{Example}
\newtheorem*{remark}{Remark}
\tikzset{>=stealth}
\tikzstyle{node} = [circle, minimum size = 1.4mm, inner sep = 0mm, color=black, fill]
\tikzstyle{hyperedge} = [rectangle, minimum width = 5mm, minimum height = 5mm, draw, inner sep = 0mm]
\tikzstyle{HG} = [align = center]
\tikzstyle{circledge} = [circle, minimum size = 7mm, inner sep = 0mm, color=black, draw]
\newcommand{\eqdef}{\mathrel{\mathop:}=}
\newcommand{\SG}{\mathrm{SG}}
\mathchardef\mhyp="2D
\newcommand{\ts}{\mathop{\mathrm{ts}}}
\newcommand{\corr}{\mathrel{\triangleright}}
\newcommand{\FB}{\mathop{\mathfrak{F}}}
\title{Grammars Based on a Logic of Hypergraph Languages}
\author{Tikhon Pshenitsyn
	\institute{Department of Mathematical Logic and Theory of Algorithms\\Faculty of Mathematics and Mechanics\\
		Lomonosov Moscow State University
		\\GSP-1, Leninskie Gory, Moscow, 119991, Russian Federation}\thanks{The study was supported by RFBR, project number 20-01-00670, by the Theoretical Physics and Mathematics Advancement Foundation ``BASIS'', and by the Interdisciplinary Scientific and Educational School of Moscow University ``Brain, Cognitive Systems, Artificial Intelligence''.}
	\email{ptihon at yandex.ru}
}
\begin{document}
	\maketitle
\begin{abstract}
	The hyperedge replacement grammar (HRG) formalism is a natural and well-known generalization of context-free grammars. HRGs inherit a number of properties of context-free grammars, e.g. the pumping lemma. This lemma turns out to be a strong restriction in the hypergraph case: it implies that languages of unbounded connectivity cannot be generated by HRGs. We introduce a formalism that turns out to be more powerful than HRGs while having the same algorithmic complexity (NP-complete). Namely, we introduce hypergraph Lambek grammars; they are based on the hypergraph Lambek calculus, which may be considered as a logic of hypergraph languages. We explain the underlying principles of hypergraph Lambek grammars, establish their basic properties, and show some languages of unbounded connectivity that can be generated by them (e.g. the language of all graphs, the language of all bipartite graphs, the language of all regular graphs).
\end{abstract}

\section{Introduction: Productions vs Types}\label{sec_intr_cfg_lg}
	Formal grammar theory is an area at the intersection of linguistics, logic, programming etc., which studies an issue of how complex, unboundedly large, but in some sense regular families of objects (strings, terms, graphs, pictures, ...), can be described using some class of finite-sized mechanisms. There exist numerous kinds of formal grammars based on different principles. For instance, standard context-free grammars deal with strings; they consist of terminal and nonterminal alphabets, a nonterminal start symbol $S$, and the list of productions of the form $X\to\alpha$, which allow one to replace the nonterminal symbol $X$ by the string $\alpha$. 
	\begin{example}
		Look at the following ``linguistic'' example (which is extremely primitive from the linguistic point of view):
		\begin{center}
			\begin{tabular}{rl}
				Productions: & $
				S\to \mathit{NP} \mbox{ sleeps} \qquad \mathit{NP} \to \mbox{the } N \qquad N \to \mbox{cat}
				$
				\\
				Derivation: & $
				S \Rightarrow \mathit{NP} \mbox{ sleeps} \Rightarrow \mbox{the } N \mbox{ sleeps} \Rightarrow \mbox{the cat sleeps}
				$
			\end{tabular}
		\end{center}
	\end{example}
	In the production $[S\to \mathit{NP} \mbox{ sleeps}]$, there is exactly one terminal object (\emph{sleeps} here) in its right-hand side. Let us transform this production as follows: $\mbox{sleeps} \corr \mathit{NP}\backslash S$. The $\corr$ sign is to be read as ``is of the type'', and $\mathit{NP}\backslash S$ stands for the type of such objects $u$ that, whenever we add an object $v$ of the type $\mathit{NP}$ ($\mathit{NP}$ stands for the class of \emph{singular noun phrases}, which includes e.g. phrases \textit{the cat}, \textit{Helen}, \textit{a colorless green idea} etc.) to the left of $u$, $vu$ forms an object of the type $S$ ($S$ stands for \emph{sentence}). Therefore, $\mbox{sleeps}\corr \mathit{NP}\backslash S$ means that the verb \emph{sleeps} is such an object that, whenever a noun phrase (singular) appears to its left, they together form a sentence. This is correct, to a first approximation: \textit{the cat sleeps}, \textit{Helen sleeps}, \textit{a green colorless idea sleeps} etc. are correct English sentences. Similarly, we can transform the production $[\mathit{NP} \to \mbox{the } N]$ into a correspondence: $\mbox{the}\corr \mathit{NP}/N$ ($N$ stands for singular nouns with their dependents that do not represent a specific object but rather a class of objects: \textit{cat}, \textit{colorless green idea}). Note that the direction of the division is different from the previous one. The following reduction laws hold: the left one $A, A\backslash B \to B$, and the right one $B/A, A\to B$. The type $A/B$ ($B\backslash A$) can be understood as the type of functions that take an argument of the type $B$ from the right (from the left resp.) and return a value of the type $A$.
	
	In linguistics, however, it is not enough to have the above reduction laws to describe language phenomena of interest. E.g. sometimes it is useful to have the rule $\mathit{NP}\to S/(\mathit{NP}\backslash S)$ or the rule $A/B,B/C\to A/C$. Besides, there is a need for an operation that would store pairs of units of certain types. E.g. when we consider sentences like \textit{Tim gave the lemon to Melany and the lime to Amelie}, we would like to assign the type $\left((\mathit{NP}\cdot \mathit{PP})\backslash (\mathit{NP}\cdot \mathit{PP})\right)/(\mathit{NP}\cdot \mathit{PP})$ to the word \textit{and} (where $\mathit{PP}$ is the type of prepositional phrases with \textit{to}: \textit{to Melany}, \textit{to Amelie}) since it receives a pair consisting of a noun phrase and a prepositional phrase from the right, and a similar pair from the left. $A\cdot B$ is the type of pairs $uv$ such that $u$ is of the type $A$, and $v$ is of the type $B$; therefore, $A \cdot B$ works as pairwise concatenation.
	
	The above connectives $\backslash,\cdot,/$ belong to the Lambek calculus ($\mathrm{L}$) --- a logical calculus introduced in \cite{Lambek58}. Types in the Lambek calculus are built from primitive types $p_1,p_2,\dotsc\in Pr$ using $\backslash,\cdot,/$. We focus on the Lambek calculus in the Gentzen style; this means that it deals with sequents, which are structures of the form $A_1,\dots,A_n\to A$ for $A_i$, $A$ being types ($A_1,\dotsc,A_n$ is called an antecedent, and $A$ is called a succeedent). The calculus $\mathrm{L}$ has one axiom and six inference rules:
	
	\begin{center}
		\begin{tabular}{cccc}
		$$\infer[(\backslash\to)]{\Gamma, \Pi, A \backslash B, \Delta \to C}{\Pi \to A & \Gamma, B, \Delta \to C}
		$$
		&
		$$\infer[(\to\backslash)]{\Pi \to A \backslash B}{A, \Pi \to B}
		$$
		&
		$$\infer[(\cdot\to)]{\Gamma, A \cdot B, \Delta \to C}{\Gamma, A, B, \Delta \to C}
		$$
		&
		$$
		\infer[(\mathrm{Ax})]{A\to A}{} 
		$$
		\\
		&&&\\
		$$
		\infer[(/\to)]{\Gamma, B / A, \Pi, \Delta \to C}{\Pi \to A & \Gamma, B, \Delta \to C}
		$$
		&
		$$\infer[(\to/)]{\Pi \to B / A}{\Pi, A \to B}
		$$
		&
		$$\infer[(\to\cdot)]{\Pi, \Psi \to A \cdot B}{\Pi \to A & \Psi \to B}
		$$
		&
		\\
		\end{tabular}
	\end{center}
	An axiom can be considered as an inference rule with zero premises. Hereinafter, primitive types are denoted by small Latin letters (in particular, from now on we write $s$ instead of $S$, $\mathit{np}$ instead of $\mathit{NP}$ etc.), types are denoted by capital Latin letters, and sequences of types are denoted by capital Greek letters; besides, $\Pi,\Psi$ above are nonempty. A sequent $\Pi\to A$ is called \textit{derivable in $\mathrm{L}$} (denoted $\mathrm{L}\vdash \Pi\to A$) if it can be obtained by applications of inference rules. 
	\begin{example}
		The following sequents are derivable in $\mathrm{L}$ (their derivations are presented below them):
		\vspace{-0.3cm}
			\begin{multicols}{3}
				\begin{itemize}
					\item $\mathit{np}/n, n, \mathit{np}\backslash s\to s$;
					\item $\mathit{np}\to s/(\mathit{np}\backslash s)$;
					\item $p\to (p\cdot q)/q$.
				\end{itemize}
			\end{multicols}
		\vspace{-0.55cm}
		$$
		\infer[(\backslash \to)]
		{
			\mathit{np}/n, n, \mathit{np}\backslash s\to s
		}
		{
			s\to s & 
			\infer[(/\to)]
			{
				\mathit{np}/n, n\to \mathit{np}
			}
			{
				\mathit{np}\to \mathit{np} & n\to n
			}
		}
		\qquad
		\infer[(\to /)]
		{
			\mathit{np}\to s/(\mathit{np}\backslash s)
		}
		{
			\infer[(\backslash \to)]
			{
				\mathit{np},\mathit{np}\backslash s\to s
			}
			{
				s\to s & \mathit{np}\to \mathit{np}
			}
		}
		\qquad
		\infer[(\to /)]
		{
			p\to (p\cdot q)/q
		}
		{
			\infer[(\to\cdot)]
			{
				p, q\to p\cdot q
			}
			{
				p\to p & q\to q
			}
		}
		$$
	\end{example}
	Finally, the above reasonings bring us to using the Lambek calculus as a grammar formalism. A grammar is a correspondence (i.e., a binary relation) $\triangleright$ between terminal objects and types along with some fixed type $S$; a string is said to be correct if its elements can be replaced by coresponding types in such a way that the sequent composed of these types in the antecedent and of $S$ in the succeedent is derivable in $\mathrm{L}$.
	
	Another feature characterizing the Lambek calculus is its language models. Namely, let us consider a function $w:Pr\to \mathcal{P}(\Sigma^\ast)$ called \emph{a valuation}; this function assigns a formal language over some fixed alphabet $\Sigma$ to each primitive type. It can be extended to types and sequents according to principles stated earlier; namely, $\overline{w}$ is defined as follows:
		\begin{multicols}{2}
			\begin{enumerate}
				\item $\overline{w}(B\backslash A)=\{u\in\Sigma^\ast\mid \forall v\in \overline{w}(B) \; vu\in \overline{w}(A)\}$;
				\item $\overline{w}(A/B)=\{u\in\Sigma^\ast\mid \forall v\in \overline{w}(B) \; uv\in \overline{w}(A)\}$;
				\item $\overline{w}(A\cdot B)=\{uv \mid u \in \overline{w}(A), v \in \overline{w}(B)\}$;
				\item $\overline{w}(A_1,\dots,A_n)=\overline{w}(A_1\cdot\dotsc\cdot A_n)$;
				\item $\overline{w}(\Pi\to A)$ is true if and only if $\overline{w}(\Pi)\subseteq \overline{w}(A)$.
			\end{enumerate}
		\end{multicols}
	Pentus \cite{Pentus95} proved that $\mathrm{L}\vdash \Pi\to A$ if and only if $\overline{w}(\Pi\to A)$ is true for all valuations $w$. This allows us to call the Lambek calculus the logic of formal languages in the sense that it describes all the true facts about formal languages in the signature $\backslash, \cdot, /, \subseteq$ and only them.
	
	It is often important to work with more complex structures than strings. This is the reason why a wide variety of extensions of generative grammars to terms, graphs etc. has been introduced. In this paper, we focus on a particular approach called \emph{hyperedge replacement grammars (HRGs)}. A survey on HRGs can be found in \cite{Drewes97, Habel92}; in this paper, we mainly follow the definitions and notation from \cite{Drewes97}. We chose HRGs as the basis for our studies since they are very close to context-free grammars in the sense of definitions, underlying mechanisms and properties. Our main goal is to extend the Lambek calculus and Lambek grammars to hypergraphs in a natural way and to study the resulting notions.
	
	In Section \ref{sec_prelim}, we define fundamental notions regarding hypergraphs and hyperedge replacement. In Section \ref{sec_HL}, we introduce the hypergraph Lambek calculus extending notions of types, sequents, axioms and rules. The formal definition of hypergraph Lambek grammars will be given in Section \ref{sec_HLG}. There we also study the power of these grammars; it turns out that they can generate more languages than HRGs, e.g. the language of all graphs, the language of all bipartite graphs, the language of all regular graphs. Since the membership problem for $\mathrm{HL}$-grammars is NP-complete and since they generate all isolated-node bounded languages generated by HRGs, they can be considered as an attractive alternative to HRGs. In Section \ref{sec_conclusion}, we conclude and outline further research directions regarding $\mathrm{HL}$ and $\mathrm{HL}$-grammars.

	\section{Hyperedge Replacement}\label{sec_prelim}
	Formal definitions of hypergraphs and of hyperedge replacement are given in this section according to \cite[Chapter 3]{Drewes97}.
	
	$\mathbb{N}$ includes $0$. $\Sigma^\ast$ is the set of all strings over the alphabet $\Sigma$ including the empty string $\Lambda$. $\Sigma^\circledast$ is the set of all strings consisting of distinct symbols (they may be considered as ordered sets). The length $|w|$ of the word $w$ is the number of symbols in $w$. The set of all symbols contained in a word $w$ is denoted by $[w]$. If $f:\Sigma\to\Delta$ is a function from one set to another, then it is naturally extended to a function $f:\Sigma^*\to\Delta^*$ ($f(\sigma_1\dots\sigma_k)=f(\sigma_1)\dots f(\sigma_k)$).
	
	\begin{definition}
		A ranked set is a set $C$ along with the ranking function $rank: C\to \mathbb{N}$.
	\end{definition}
	\begin{definition}\label{def_hypergraph}
		\emph{A hypergraph $G$ over $C$} is a tuple $G=\langle V, E, att, lab, ext \rangle$ where $V$ is a set of \emph{nodes}, $E$ is a set of \emph{hyperedges}, $att: E\to V^\circledast$ assigns an ordered set of \emph{attachment} nodes to each edge, $lab: E \to C$ labels each edge by some element of $C$ in such a way that $rank(lab(e))=|att(e)|$ whenever $e\in E$, and $ext\in V^\circledast$ is an ordered set of \emph{external} nodes. 
		
		Components of a hypergraph $G$ are denoted by $V_G, E_G, att_G, lab_G, ext_G$ resp.
	\end{definition}
	In the remainder of the paper, hypergraphs are usually called just graphs, and hyperedges are called edges. The set of all graphs with labels from $C$ is denoted by $\mathcal{H}(C)$. Graphs are usually named by letters $G$ and $H$.
	
	In drawings of graphs, black dots correspond to nodes, labeled squares correspond to edges, $att$ is represented by numbered lines, and external nodes are depicted by numbers in brackets. If an edge has exactly two attachment nodes, it can be depicted by an arrow (which goes from the first attachment node to the second one).
	
	Note that Definition \ref{def_hypergraph} implies that attachment nodes of each hyperedge are distinct, and so are external nodes. This restriction can be removed (i.e. we can consider graphs with loops), and all definitions will be preserved; however, in this paper, we stick to the above definition.
	\begin{definition}\label{def_rank}
		The function $rank_G$ (or $rank$, if $G$ is clear) returns the number of nodes attached to an edge in a graph $G$: $rank_G(e)\eqdef|att_G(e)|$.
		If $G$ is a graph, then $rank(G)\eqdef |ext_G|$.
	\end{definition}
	A sub-hypergraph (or just subgraph) $H$ of a graph $G$ is a hypergraph such that $V_H\subseteq V_G$, $E_H\subseteq E_G$, and for all $e\in E_H$ $att_H(e)=att_G(e)$, $lab_H(e)=lab_G(e)$.
	
	If $H=\langle \{v_i\}_{i=1}^n,\{e_0\},att,lab,v_1\dots v_n\rangle$, $att(e_0)=v_1\dots v_n$ and $lab(e_0)=a$, then $H$ is called \emph{a handle}. In this work, we denote it by $a^\bullet$.
	
	\emph{An isomorphism} between graphs $G$ and $H$ is a pair of bijective functions $\mathcal{E}: E_G\to E_H$, $\mathcal{V}: V_G\to V_H$ such that $att_H\circ\mathcal{E}=\mathcal{V}\circ att_G$, $lab_G=lab_H\circ\mathcal{E}$, $\mathcal{V}(ext_G)=ext_H$. In this work, we do not distinguish between isomorphic graphs.
	
	Strings can be considered as graphs with the string structure. This is formalized in
	\begin{definition}\label{def_str_gr}
		\emph{A string graph} induced by a string $w=a_1\dots a_n$ is a graph of the form $\langle \{v_i\}_{i=0}^n,\{e_i\}_{i=1}^n,att,\\lab,v_0v_n \rangle$ where $att(e_i)=v_{i-1}v_i$, $lab(e_i)=a_i$. It is denoted by $\SG(w)$.
	\end{definition}
	We additionally introduce the following definition (not from \cite{Drewes97}):
	\begin{definition}
		Let $H\in \mathcal{H}(C)$ be a graph. Let $f:E_H\to C$ be a function, where $rank(lab_H(e))=rank(f(e))$ for $e\in E_H$. Then $f(H)=\langle V_H, E_H, att_H,\\ f, ext_H\rangle$.
	\end{definition}
	\noindent
	If one wants to relabel only one edge $e_0$ within $H$ with a label $a$, then the result is denoted by $H[e_0\eqdef a]$.
	
	\begin{definition}\label{ssec_repl}
		\emph{Hyperedge replacement} is defined in \cite{Drewes97}, and it plays a fundamental role in hyperedge replacement grammars. The replacement of an edge $e_0$ in $G$ with a graph $H$ can be done if $rank(e_0)=rank(H)$ as follows:
		\begin{enumerate}
			\item Remove $e_0$;
			\item Insert an isomorphic copy of $H$ ($H$ and $G$ have to consist of disjoint sets of nodes and edges);
			\item For each $i$ in $1..rank_G(e_0)$, fuse the $i$-th external node of $H$ with the $i$-th attachment node of $e_0$.
		\end{enumerate}
		The result is denoted by $G[e_0/H]$. 
	\end{definition}
	It is known that if several edges of a graph are replaced by other graphs, then the result does not depend on the order of replacements; moreover the result is not changed if replacements are done simultaneously (see \cite{Drewes97}). The following notation is in use: if $e_1,\dots,e_k$ are distinct edges of a graph $H$ and they are simultaneously replaced by graphs $H_1,\dots,H_k$ resp. (this requires $rank(H_i)=rank(e_i)$), then the result is denoted $H[e_1/H_1,\dots,e_k/H_k]$.
	
	\section{Hypergraph Lambek Calculus}\label{sec_HL}
	HRGs can be used to describe linguistic structures as well as context-free grammars since linguistic objects often have an underlying structure, which is more complex than a string. One of the recently studied applications is using HRGs for abstract meaning representation (see e.g. \cite{Drewes17,Gilroy17,Jones12}): the meaning of a sentence is represented by a graph.  In \cite{Bauer16}, HRGs are used for modelling nonprojective dependencies in Dutch. Another example where graph structures occur in linguistics is syntactic trees. Given a context-free grammar, it is natural to represent an internal hierarchical structure of constituents of a generated sentence by a tree.
	\begin{example}\label{ex_hrg_tree}
		The HRG $HGr = \langle\{S,\mathit{NP},N\},\{\mbox{the},\mbox{cat},\mbox{sleeps},l,r\},P,S\rangle$ with the list of productions $P$ defined below generates the graph $\mathit{Syntree}$:
		\begin{center}
		\begin{tabular}{c|c}
			\begin{tabular}{l}
				$
				S\;\to\; {\tikz[baseline=.1ex]{
						\node (V) {};
						\node[hyperedge,above=-2mm of V] (E1) {$\:\mathit{NP}\:$};
						\node[node,right=5mm of E1] (N1) {};
						\node[node,right=10mm of N1, label=above:{\scriptsize $(1)$}] (N2) {};
						\node[node,right=10mm of N2] (N3) {};
						\node[hyperedge,right=5mm of N3] (E2) {\:sleeps\:};
						\draw[-,black] (E1) -- node[above] {\scriptsize 1} (N1);
						\draw[-,black] (N3) -- node[above] {\scriptsize 1} (E2);
						\draw[>=stealth,->,black] (N2) -- node[above] {$l$} (N1);
						\draw[>=stealth,->,black] (N2) -- node[above] {$r$} (N3);
				}}
				$
				\\
				\\
				$
				\mathit{NP}\;\to\; {\tikz[baseline=.1ex]{
						\node (V) {};
						\node[hyperedge,above=-2mm of V] (E1) {\:the\:};
						\node[node,right=5mm of E1] (N1) {};
						\node[node,right=10mm of N1, label=above:{\scriptsize $(1)$}] (N2) {};
						\node[node,right=10mm of N2] (N3) {};
						\node[hyperedge,right=5mm of N3] (E2) {$\:N\:$};
						\draw[-,black] (E1) -- node[above] {\scriptsize 1} (N1);
						\draw[-,black] (N3) -- node[above] {\scriptsize 1} (E2);
						\draw[>=stealth,->,black] (N2) -- node[above] {$l$} (N1);
						\draw[>=stealth,->,black] (N2) -- node[above] {$r$} (N3);
				}}
				$
				\\
				\\
				$N\;\to\;{\tikz[baseline=.1ex]{
						\node (V) {};
						\node[node,above=0mm of V, label=left:{\scriptsize $(1)$}] (N1) {};
						\node[hyperedge,right=5mm of N1] (E1) { cat };
						\draw[-,black] (E1) -- node[above] {\scriptsize 1} (N1);
				}}$
			\end{tabular}
			&
			$\mathit{Syntree}={\tikz[baseline=.1ex]{
					\node (V) {};
					\node[node, above=12mm of V, label=above:{\scriptsize $(1)$}] (N1) {};
					\node[node, below left=5mm and 9mm of N1] (N2) {};
					\node[node, below right=5mm and 9mm of N1] (N3) {};
					\node[node, below left=7mm and 4mm of N2] (N4) {};
					\node[node, below right=7mm and 4mm of N2] (N5) {};
					\node[hyperedge,below=3mm of N3] (E1) {\:sleeps\:};
					\node[hyperedge,below=3mm of N4] (E2) {\:the\:};
					\node[hyperedge,below=3mm of N5] (E3) {\:cat\:};
					\draw[-,black] (N3) -- node[right] {\scriptsize 1} (E1);
					\draw[-,black] (N4) -- node[right] {\scriptsize 1} (E2);
					\draw[-,black] (N5) -- node[right] {\scriptsize 1} (E3);
					\draw[->,black] (N1) -- node[above] {$l$} (N2);
					\draw[->,black] (N1) -- node[above] {$r$} (N3);
					\draw[->,black] (N2) -- node[left] {$l$} (N4);
					\draw[->,black] (N2) -- node[right] {$r$} (N5);
			}}$
			\\
		\end{tabular}
		\end{center}
		Here $rank(S)=rank(NP)=rank(N)=rank(\mbox{sleeps})=rank(\mbox{the})=rank(\mbox{cat})=1$, and $rank(l)=\\=rank(r)=2$.
		
		$\mathit{Syntree}$ is a simplified syntactic tree for the sentence \textit{the cat sleeps}; $l$ and $r$ distinguish left and right children in the tree.
	\end{example}
	Since there are such cases in linguistics where we need to work with graphs rather than with strings, we would like to generalize the categorial point of view discussed in Section \ref{sec_intr_cfg_lg} to hypergraphs. Our first attempt was a generalization of basic categorial grammars to hypergraphs; the resulting formalism called \textit{hypergraph basic categorial grammar} was introduced at ICGT 2020 \cite{Pshenitsyn20_2}. However, this formalism did not significantly improve our insight since most results were proved in the same way as for strings; in particular, such grammars generate the same set of languages as HRGs (with some nonsubstantial exceptions related to the number of isolated nodes). In this paper, we aim to go further and to discuss how the Lambek calculus along with its grammars can be generalized to hypergraphs. Note that, in the string case, Lambek grammars (i.e. grammars based on the Lambek calculus) are equivalent to context-free grammars and to basic categorial grammars. (This nontrivial result was proved in \cite{Pentus93}.)
	
	It is known that strings can be represented as string graphs, e.g. 	{\tikz[baseline=.1ex]{
				\node[node,label=left:{\scriptsize $(1)$}] (N1) {};
				\node[node,right=12mm of N1] (N2) {};
				\node[node,right=12mm of N2] (N3) {};
				\node[node,right=12mm of N3,label=right:{\scriptsize $(2)$}] (N4) {};
				\draw[->,black] (N1) -- node[above] {the} (N2);
				\draw[->,black] (N2) -- node[above] {cat} (N3);
				\draw[->,black] (N3) -- node[above] {sleeps} (N4);
	}} represents the string \textit{the cat sleeps} (this graph is denoted as $\SG(\mbox{the cat sleeps})$). The production $S\to \mathit{NP} \mbox{ sleeps}$ then is transformed into the graph production $S\to \SG(\mathit{NP} \mbox{ sleeps})$. We want to transform this production into a correspondence as it was done in Section \ref{sec_intr_cfg_lg}; there we ``took out'' the terminal unit \textit{sleeps} from the right-hand side, and assigned the type $\mathit{NP}\backslash S$ to it. Now, we are going to do the same operation but we shall mark an edge, from which we took out the label \textit{sleeps}, by a special \$ symbol:
	$$
	S\to \SG(\mathit{NP} \mbox{ sleeps}) \quad\rightsquigarrow\quad \mbox{sleeps} \corr S/\SG(\mathit{NP} \;\$)
	$$
	In general, \$ denotes the hyperedge, from which we took out its label. Similarly, the production $[\mathit{NP}\to\mbox{the } N]$ is transformed as follows: $\mbox{the} \corr \mathit{NP}/\SG(\$\;N)$. Note that now we do not need two divisions $\backslash$ and $/$ anymore since the difference between them is now expressed by the position of the \$-labeled edge. In order to distinguish the string divisions and the new graph division, and also to stress that the latter is undirected, we write $A\div D$ instead of $A/D$ for the latter. 
	
	The conversion of a production into a correspondence between a terminal unit (a symbol, a label, a word) and a type requires that there is exactly one terminal unit in the right-hand side of the production. This property is called weak Greibach normal form. Note that, if we have e.g. a production $S\to  \SG(ABaCD)$ where $S,A,B,C,D$ are nonterminal, and $a$ is terminal, then we can also use the above \$-notation and write $a\corr S\div \SG(AB\$CD)$. As in Section \ref{sec_intr_cfg_lg}, this means that $a$ is such an object that whenever objects of types $A$, $B$, $C$ and $D$ are placed instead of corresponding edges in the graph $\SG(AB\$CD)$, and $a$ is placed on the \$-labeled edge, the resulting structure forms an object of the type $S$. We can proceed similarly with an arbitrary hypergraph production, if the grammar is in the weak Greibach normal form. This is the main idea of hypergraph basic categorial grammars. However, as in the string case, we would like to go further and to consider more complex types; besides, we would also like to generalize the operation $A \cdot B$. This results in the following
	\begin{definition}\label{def_type}
		The set $\mathit{Tp}(\mathrm{HL})$ of types of the hypergraph Lambek calculus $\mathrm{HL}$ is defined inductively as the least set satisfying the following conditions:
		\begin{enumerate}
			\item Primitive types $Pr$ are types. $Pr$ is a countable ranked set such that for each $n$ there are infinitely many types $p\in Pr$ such that $rank(p)=n$.
			\item\label{def_type_div} Let $N$ (``numerator'') be a type. Let $D$ (``denominator'') be a hypergraph such that exactly one of its hyperedges (call it $e_0$) is labeled by $\$$, and the other hyperedges (possibly, there are none of them) are labeled by elements of $\mathit{Tp}(\mathrm{HL})$; let also $rank(N)=rank(D)$. Then $T=(N\div D)$ is a type, and $rank(T)\eqdef rank_D(e_0)$.
			\item\label{def_type_times} Let $M$ be a hypergraph such that all its hyperedges are labeled by types from $\mathit{Tp}(\mathrm{HL})$ (possibly, there are no hyperedges at all). Then $T=\times(M)$ is also a type, and $rank(T)\eqdef rank(M)$.
		\end{enumerate}
	\end{definition}
	Below we often write $N\div (D)$ or $N\div D$ instead of $(N\div D)$. Item \ref{def_type_div} generalizes the concept of $\div$ explained earlier: $N\div D$ is the type of such hypergraphs $H$ that, if we replace the \$-labeled edge in $D$ by $H$ and for all the remaining edges $e_i,i>0$, which are labeled by some types $T_i$, we replace them by hypergraphs $H_i$, which are of types $T_i$, then we obtain a hypergraph of the type $N$. In particular, this explains why we require $rank(N)=rank(D)$ and $rank(N\div D)=rank_D(e_0)$.
	\begin{example}\label{ex_prodtorel}
		The first production from Example \ref{ex_hrg_tree} can be transformed into the following correspondence:
		\begin{equation}\label{eq_sleeps_tree}
			\mbox{sleeps} \; \corr \; s\div \left({\tikz[baseline=.1ex]{
					\node (V) {};
					\node[hyperedge,above=-3mm of V] (E1) {$\:\mathit{np}\:$};
					\node[node,right=5mm of E1] (N1) {};
					\node[node,right=10mm of N1, label=above:{\scriptsize $(1)$}] (N2) {};
					\node[node,right=10mm of N2] (N3) {};
					\node[hyperedge,right=5mm of N3] (E2) {\$};
					\draw[-,black] (E1) -- node[above] {\scriptsize 1} (N1);
					\draw[-,black] (N3) -- node[above] {\scriptsize 1} (E2);
					\draw[>=stealth,->,black] (N2) -- node[above] {$p_l$} (N1);
					\draw[>=stealth,->,black] (N2) -- node[above] {$p_r$} (N3);
			}}\right)
		\end{equation}
		Here $p_l$, $p_r$ are primitive types introduced to deal with special ``technical'' labels $l$ ($r$ resp.). According to our general understanding of types, one may say that $l$ ($r$) is of the type $p_l$ ($p_r$ resp.). Now, (\ref{eq_sleeps_tree}) means that \textit{sleeps} is such an object that, if we place it instead of \$ within the graph {\tikz[baseline=.1ex]{
				\node (V) {};
				\node[hyperedge,above=-2mm of V] (E1) {$\:\mathit{np}\:$};
				\node[node,right=5mm of E1] (N1) {};
				\node[node,right=10mm of N1, label=above:{\scriptsize $(1)$}] (N2) {};
				\node[node,right=10mm of N2] (N3) {};
				\node[hyperedge,right=5mm of N3] (E2) {\$};
				\draw[-,black] (E1) -- node[above] {\scriptsize 1} (N1);
				\draw[-,black] (N3) -- node[above] {\scriptsize 1} (E2);
				\draw[>=stealth,->,black] (N2) -- node[above] {$p_l$} (N1);
				\draw[>=stealth,->,black] (N2) -- node[above] {$p_r$} (N3);
		}}, and we place any object (syntactic tree) of the type $\mathit{np}$ instead of the $\mathit{np}$-labeled edge, then we obtain an object of the type $s$ (we also need to replace $p_l$ by $l$ and $p_r$ by $r$).
	\end{example}
	The operation $\times(M)$ can be called a hypergraph product, or a hypergraph concatenation. Its general semantics it the following: if $E_M=\{m_1,\dots, m_l\}$, $lab_M(m_i)=T_i$, and $H_i$ is a hypergraph of the type $T_i$ ($i=1,\dots, l$), then the hypergraph $M[m_1/H_1,\dots,m_l/H_l]$ is a hypergraph of the type $\times(M)$. Thus, $\times(M)$ is the type of all substitution instances of $M$.
	\begin{example}\label{ex_strstr}
		If $str$ is the primitive type of all string graphs labeled by the blank symbol $\ast$, then the type $\times\left({\tikz[baseline=.1ex]{
				\node (V) {};
				\node[node, above=-1mm of V, label=left:{\scriptsize $(1)$}] (N1) {};
				\node[node,right=12mm of N1, label=right:{\scriptsize $(2)$}] (N2) {};
				\draw[>=stealth,->,black] (N1) to[bend left=30] node[above] {$str$} (N2);
				\draw[>=stealth,->,black] (N1) to[bend right=30] node[below] {$str$} (N2);
		}}\right)$ is the type of all graphs consisting of two parallel strings with the common start and finish nodes, e.g. {\tikz[baseline=.1ex]{
		\node (V) {};
		\node[node, above=-1mm of V, label=left:{\scriptsize $(1)$}] (N1) {};
		\node[node,right=24mm of N1, label=right:{\scriptsize $(2)$}] (N2) {};
		\node[node,above right= 1.5mm and 8mm of N1] (U1) {};
		\node[node,right= 6.6mm of U1] (U2) {};
		\node[node,below right= 1.5mm and 12mm of N1] (B1) {};

		\draw[>=stealth,->,black] (N1) to[bend left=0] (U1);
		\draw[>=stealth,->,black] (U1) to[bend right=0] (U2);
		\draw[>=stealth,->,black] (U2) to[bend right=0] (N2);
		\draw[>=stealth,->,black] (N1) to[bend right=0] (B1);
		\draw[>=stealth,->,black] (B1) to[bend right=0] (N2);
	}}.
	\end{example}
	Moving away from the intuition of $\div$ and $\times$, we would like to introduce a syntactic \emph{calculus}, which would work with types introduced in Definition \ref{def_type} by means of axioms and rules. We expect that this calculus should generalize the Lambek calculus in the Gentzen style introduced in Section \ref{sec_intr_cfg_lg}. This is done in our preprint \cite{Pshenitsyn21}; in this paper, we only introduce the axiom and rules of the hypergraph Lambek calculus ($\mathrm{HL}$) without a detailed discussion of why they actually generalize those of $\mathrm{L}$.
	\begin{definition}
		\emph{A hypergraph sequent} is a structure of the form $H\to A$, where $A\in \mathit{Tp}(\mathrm{HL})$ is a type, $H\in\mathcal{H}(\mathit{Tp}(\mathrm{HL}))$ is a hypergraph labeled by types and $rank(H)=rank(A)$. $H$ is called the \emph{antecedent} of the sequent, and $A$ is called the \emph{succedent} of the sequent.
	\end{definition}
	\begin{remark}
		Returning to our intuition, $H\to A$ could be understood as the statement ``each hypergraph of type $H$ is also of type $A$'' (in a similar way as for $\mathrm{L}$ that we described in Section \ref{sec_intr_cfg_lg}). However, we do not have types of the form $H$ for $H$ being a hypergraphs. Note that there we defined $\overline{w}(A_1,\dotsc,A_n)$ as $\overline{w}(A_1\cdot\dotsc\cdot A_n)$ in the string case. This allows us to conclude that here we should also understand $H\to A$ as the statement ``each hypergraph of type $\times(H)$ is also of type $A$''
	\end{remark}
	The hypergraph Lambek calculus $\mathrm{HL}$ deals with hypergraph sequents and explains, which of them are \textit{derivable} using axioms and rules. The only \textbf{axiom} of $\mathrm{HL}$ is the following: $p^\bullet\to p,\quad p\in Pr$ ($p^\bullet$ here is the $p$-handle). Rules are presented below along with some simple examples.
	\begin{enumerate}
		\item\label{subsec_div_to} \textbf{Rule $(\div\to)$.} Let $N\div D$ be a type and let $E_D=\{d_0,d_1,\dots,d_k\}$ where $lab(d_0)=\$$, $lab(d_i)=T_i$ for $i\ge 1$. Let $H\to A$ be a hypergraph sequent and let $e\in E_H$ be labeled by $N$. Let finally $H_1,\dots,H_k$ be hypergraphs labeled by types. Then the rule $(\div\to)$ is the following:
		$$
		\infer[(\div\to)]{H[e/D][d_0\eqdef N\div D][d_1/H_1,\dots,d_k/H_k]\to A}{H\to A & H_1\to T_1 &\dots & H_k\to T_k}
		$$
		This rule explains how a type with division may appear in the antecedent of a sequent: we replace a hyperedge $e$ by $D$, put a label $N\div D$ instead of \$ and replace the remaining labels of $D$ by corresponding antecedents. 
		\begin{example}\label{ex_div_to}
			Consider the following rule application with $T_i$ being some types and with $T$ being equal to $q\div\SG(T_2\$T_3)$:
			$$
			\infer[(\div\to)]{
				\SG(prs\,T\,tu)\to T_1
			}{
				\SG(pq)\to T_1 & \SG(rs)\to T_2 & \SG(tu) \to T_3 
			}
			$$
		\end{example}
		\item \textbf{Rule $(\to\div)$.} Let $F\to N\div D$ be a hypergraph sequent; let $e_0\in E_D$ be labeled by \$. Then
		$$
		\infer[(\to\div)]{F\to N\div D}{D[e_0/F]\to N}
		$$
		This rule is understood as follows: if there are such hypergraphs $D,F$ and such a type $N$ that in a sequent $H\to N$ the hypergraph $H$ equals $D[e_0/F]$ and $H\to N$ is derivable, then $F\to N\div D$ is also derivable.
		\begin{example}\label{ex_to_div}
			Consider the following rule application where $N$ equals $\times (\SG(pqr))$ (here we draw string graphs instead of writing $\SG(w)$ to visualize the rule application):
			$$
			\infer[(\to\div)]{
				\mbox{	
					{\tikz[baseline=.1ex]{
							\node[node,label=left:{\scriptsize $(1)$}] (N1) {};
							\node[node,right=8mm of N1] (N2) {};
							\node[node,right=8mm of N2,label=right:{\scriptsize $(2)$}] (N3) {};
							\draw[->,black] (N1) -- node[above] {$p$} (N2);
							\draw[->,black] (N2) -- node[above] {$q$} (N3);
				}}}\to N\div\left(\mbox{	
					{\tikz[baseline=.1ex]{
							\node[node,label=left:{\scriptsize $(1)$}] (N1) {};
							\node[node,right=8mm of N1] (N2) {};
							\node[node,right=8mm of N2,label=right:{\scriptsize $(2)$}] (N3) {};
							\draw[->,black] (N1) -- node[above] {$\$$} (N2);
							\draw[->,black] (N2) -- node[above] {$r$} (N3);
				}}}\right)
			}{
				\mbox{	
					{\tikz[baseline=.1ex]{
							\node[node,label=left:{\scriptsize $(1)$}] (N1) {};
							\node[node,right=8mm of N1] (N2) {};
							\node[node,right=8mm of N2] (N3) {};
							\node[node,right=8mm of N3,label=right:{\scriptsize $(2)$}] (N4) {};
							\draw[->,black] (N1) -- node[above] {$p$} (N2);
							\draw[->,black] (N2) -- node[above] {$q$} (N3);
							\draw[->,black] (N3) -- node[above] {$r$} (N4);
				}}}\to N}
			$$
		\end{example}
		\item \textbf{Rule $(\times\to)$.} Let $G\to A$ be a hypergraph sequent and let $e\in E_G$ be labeled by $\times(F)$. Then
		$$
		\infer[(\times\to)]{G\to A}{G[e/F]\to A}
		$$
		This rule is formulated from bottom to top as the previous one. Intuitively speaking, there is a subgraph of an antecedent in a premise, and it is ``compressed'' into a single $\times(F)$-labeled hyperedge.
		\begin{example}\label{ex_times_to}
			Consider the following rule application where $U$ equals $\times(\SG(pqrs))$:
			$$
			\infer[(\times\to)]{
				\mbox{	
					{\tikz[baseline=.1ex]{
							\node[node,label=left:{\scriptsize $(1)$}] (N1) {};
							\node[node,right=8mm of N1] (N2) {};
							\node[node,right=17.3mm of N2] (N4) {};
							\node[node,right=8mm of N4,label=right:{\scriptsize $(2)$}] (N5) {};
							\draw[->,black] (N1) -- node[above] {$p$} (N2);
							\draw[->,black] (N2) -- node[above] {$\times(\SG(qr))$} (N4);
							\draw[->,black] (N4) -- node[above] {$s$} (N5);	
				}}}\to U
			}{
				\mbox{	
					{\tikz[baseline=.1ex]{
							\node[node,label=left:{\scriptsize $(1)$}] (N1) {};
							\node[node,right=8mm of N1] (N2) {};
							\node[node,right=8mm of N2] (N3) {};
							\node[node,right=8mm of N3] (N4) {};
							\node[node,right=8mm of N4,label=right:{\scriptsize $(2)$}] (N5) {};
							\draw[->,black] (N1) -- node[above] {$p$} (N2);
							\draw[->,black] (N2) -- node[above] {$q$} (N3);
							\draw[->,black] (N3) -- node[above] {$r$} (N4);
							\draw[->,black] (N4) -- node[above] {$s$} (N5);	
				}}}\to U}
			$$
		\end{example}
		\item \textbf{Rule $(\to\times)$.} Let $\times(M)$ be a type and let $E_M=\{m_1,\dots,m_l\}$. Let $H_1,\dots,H_l$ be graphs. Then
		$$
		\infer[(\to\times)]{M[m_1/H_1,\dots,m_l/H_l]\to\times(M)}{H_1\to lab(m_1) & \dots & H_l\to lab(m_l)}
		$$
		This rule is quite intuitive: several sequents can be combined into a single one via some hypergraph structure $M$.
		\begin{example}\label{ex_to_times}
			Consider the following rule application with $T_i$ being some types:
			$$
			\infer[(\to\times)]{
				\SG(pqrstu)\to \times(\SG(T_1T_2T_3))
			}{
				\SG(pq) \to T_1 & \SG(rs) \to T_2 & \SG(tu) \to T_3 
			}
			$$
		\end{example}
	\end{enumerate}
	\begin{definition}
		A hypergraph sequent $H\to A$ is \emph{derivable in $\mathrm{HL}$}, written as $\mathrm{HL}\vdash H\to A$, if it can be obtained from axioms using rules of $\mathrm{HL}$. A corresponding sequence of rule applications is called \emph{a derivation} and its representation as a tree is called \emph{a derivation tree}.
	\end{definition}
	\begin{example}
		Let $rank(s)=rank(\mathit{np})=rank(n)=1$, $rank(p_l)=rank(p_r)=2$, and let
		\begin{multicols}{2}
		$V=s\div \left({\tikz[baseline=.1ex]{
					\node (V) {};
					\node[hyperedge,above=-3mm of V] (E1) {$\:\mathit{np}\:$};
					\node[node,right=5mm of E1] (N1) {};
					\node[node,right=10mm of N1, label=above:{\scriptsize $(1)$}] (N2) {};
					\node[node,right=10mm of N2] (N3) {};
					\node[hyperedge,right=5mm of N3] (E2) {\$};
					\draw[-,black] (E1) -- node[above] {\scriptsize 1} (N1);
					\draw[-,black] (N3) -- node[above] {\scriptsize 1} (E2);
					\draw[>=stealth,->,black] (N2) -- node[above] {$p_l$} (N1);
					\draw[>=stealth,->,black] (N2) -- node[above] {$p_r$} (N3);
			}}\right)$;
		
		$\mathit{Adj}=\mathit{np}\div \left({\tikz[baseline=.1ex]{
					\node (V) {};
					\node[hyperedge,above=-3mm of V] (E1) {\$};
					\node[node,right=5mm of E1] (N1) {};
					\node[node,right=10mm of N1, label=above:{\scriptsize $(1)$}] (N2) {};
					\node[node,right=10mm of N2] (N3) {};
					\node[hyperedge,right=5mm of N3] (E2) {$n$};
					\draw[-,black] (E1) -- node[above] {\scriptsize 1} (N1);
					\draw[-,black] (N3) -- node[above] {\scriptsize 1} (E2);
					\draw[>=stealth,->,black] (N2) -- node[above] {$p_l$} (N1);
					\draw[>=stealth,->,black] (N2) -- node[above] {$p_r$} (N3);
			}}\right)$.
		\end{multicols}
		Then the following is the derivation of the below sequent:
		$$
		\infer[(\div\to)]
		{
			{\tikz[baseline=.1ex]{
					\node (V) {};
					\node[node, above=12mm of V, label=above:{\scriptsize $(1)$}] (N1) {};
					\node[node, below left=5mm and 9mm of N1] (N2) {};
					\node[node, below right=5mm and 9mm of N1] (N3) {};
					\node[node, below left=7mm and 4mm of N2] (N4) {};
					\node[node, below right=7mm and 4mm of N2] (N5) {};
					\node[hyperedge,below=3mm of N3] (E1) {$V$};
					\node[hyperedge,below=3mm of N4] (E2) {\:$\mathit{Adj}$\:};
					\node[hyperedge,below=3mm of N5] (E3) {\:$n$\:};
					\draw[-,black] (N3) -- node[right] {\scriptsize 1} (E1);
					\draw[-,black] (N4) -- node[right] {\scriptsize 1} (E2);
					\draw[-,black] (N5) -- node[right] {\scriptsize 1} (E3);
					\draw[->,black] (N1) -- node[above] {$p_l$} (N2);
					\draw[->,black] (N1) -- node[above] {$p_r$} (N3);
					\draw[->,black] (N2) -- node[left] {$p_l$} (N4);
					\draw[->,black] (N2) -- node[right] {$p_r$} (N5);
			}} \quad \to \quad s
		}
		{
			\infer[(\div\to)]{
				{\tikz[baseline=.1ex]{
						\node (V) {};
						\node[node, above=3mm of V, label=above:{\scriptsize $(1)$}] (N1) {};
						\node[node, left=9mm of N1] (N2) {};
						\node[node, right=9mm of N1] (N3) {};
						\node[hyperedge,below=3mm of N3] (E1) {$V$};
						\node[hyperedge,below=3mm of N2] (E3) {\:$\mathit{np}$\:};
						\draw[-,black] (N3) -- node[right] {\scriptsize 1} (E1);
						\draw[-,black] (N2) -- node[right] {\scriptsize 1} (E3);
						\draw[->,black] (N1) -- node[above] {$p_l$} (N2);
						\draw[->,black] (N1) -- node[above] {$p_r$} (N3);
				}} \; \to \; s
			}
			{
				s^\bullet \to s & \mathit{np}^\bullet \to \mathit{np}
				&
				p_l^\bullet \to p_l
				&
				p_r^\bullet \to p_r
			}
			&
			\quad n^\bullet \to n
			&
			p_l^\bullet \to p_l
			&
			p_r^\bullet \to p_r
		}
		$$
	\end{example}
	In \cite{Pshenitsyn21}, we show that $\mathrm{L}$ and its different variants (with modalities, with the permutation rule etc.) can be embedded in $\mathrm{HL}$; we also show that certain structural properties of $\mathrm{L}$ can be straightforwardly lifted to $\mathrm{HL}$. Hence $\mathrm{HL}$ can be considered as an appropriate extension of the Lambek calculus to hypergraphs, as desired. Note that introduction of the division $\div$ and of the product $\times$ was motivated by the intuitive understanding of types as of families of hypergraphs (i.e. hypergraph languages). Although $\mathrm{HL}$ was defined as a purely syntactic formalism that formally explains how hypergraph sequents can be rewritten, one would expect that hypergraph languages can be considered as models of $\mathrm{HL}$. In Section \ref{sec_model}, we formally define language models (L-models in short) for $\mathrm{HL}$ in a way similar to how we introduced the notion of valuation $\overline{w}$ in Section \ref{sec_intr_cfg_lg}. We establish correctness of $\mathrm{HL}$ with respect to L-models, and completeness of its $\times$-free fragment.
	
	The following statements can be proved in a similar way as for strings (see \cite{Pshenitsyn21}):
	\begin{theorem}[cut elimination]\label{th_cut}
		If $H\to A$ and $G\to B$ are derivable in $\mathrm{HL}$, and $e_0\in E_G$ is labeled by $A$, then $G[e_0/H]\to B$ is also derivable in $\mathrm{HL}$.
	\end{theorem}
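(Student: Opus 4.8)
The statement asserts that the cut rule is admissible in $\mathrm{HL}$, and the plan is to prove it by the classical Gentzen-style double induction for the Lambek calculus (cf.\ \cite{Lambek58,Pshenitsyn21}), transported to hypergraphs. The primary induction is on the complexity of the cut type $A$, where one takes complexity $1$ for a primitive type, $1+|N|+\sum|lab_D(e)|$ for $N\div D$ (the sum ranging over the non-\$ edges of $D$), and $1+\sum|lab_M(e)|$ for $\times(M)$; with this measure the numerator $N$ and every type labelling an edge of $D$ or of $M$ is strictly simpler than $A$. The secondary induction is on the sum of the heights of the given derivations $\delta$ of $H\to A$ and $\varepsilon$ of $G\to B$. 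One then inspects the last rules of $\delta$ and $\varepsilon$. The anchor cases are immediate: if $\delta$ is the axiom then $H=p^\bullet$, $A=p$ and $G[e_0/p^\bullet]$ is isomorphic to $G$, so $\varepsilon$ already works; if $\varepsilon$ is the axiom then $G=q^\bullet$, $A=q=B$, $G[e_0/H]=H$, so $\delta$ works.

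The \emph{principal} cases are those in which $\delta$ ends with a right rule introducing the main constructor of $A$ and $\varepsilon$ ends with the matching left rule acting on $e_0$. If $A=N\div D$, then $\delta$ ends with $(\to\div)$ from $D[e_0^D/H]\to N$ (with $e_0^D$ the \$-edge of $D$), while $\varepsilon$ ends with $(\div\to)$ on $e_0$, so $G=G'[e/D][e_0\eqdef N\div D][d_1/H_1,\dots,d_k/H_k]$ with premises $G'\to B$ and $H_i\to T_i$, $T_i=lab_D(d_i)$. One first cuts each $H_i\to T_i$ into $D[e_0^D/H]\to N$ on the edge $d_i$, then cuts the result into $G'\to B$ on the edge $e$ labelled $N$; all these cuts are on types $T_i$ and $N$ that are strictly simpler than $A$, so the primary induction hypothesis applies, and by associativity and order-independence of hyperedge replacement the outcome is exactly $G[e_0/H]\to B$. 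The case $A=\times(M)$ is the dual: $\delta$ ends with $(\to\times)$, so $H=M[m_1/H_1,\dots,m_l/H_l]$ with premises $H_i\to lab_M(m_i)$, and $\varepsilon$ ends with $(\times\to)$ on $e_0$ from $G[e_0/M]\to B$; one cuts each $H_i$ into $G[e_0/M]\to B$ on the copy of $m_i$, again on strictly simpler types, and rewrites $G[e_0/M][m_1/H_1,\dots,m_l/H_l]$ as $G[e_0/H]$.

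In every remaining situation the cut occurrence is not principal on at least one side, and we are in a \emph{commutative} case: either $\delta$ ends with a left rule ($(\div\to)$ or $(\times\to)$) acting on some edge of $H$, or the last rule of $\varepsilon$ does not have $e_0$ as its principal edge (it is a right rule, or a left rule on an edge $\neq e_0$; in all these cases $e_0$ survives inside exactly one premise). One then pushes the cut past that rule: cut $H\to A$ (resp.\ $\delta$'s auxiliary premise) into the premise containing the surviving copy of $e_0$, and re-apply the permuted rule on the appropriate copy of its principal edge. Each such step is a cut on the \emph{same} type $A$ but with strictly smaller total derivation height, so the secondary induction hypothesis applies. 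I expect the main obstacle to be exactly the bookkeeping here: since every rule of $\mathrm{HL}$ is phrased via hyperedge replacement, one has to repeatedly invoke the (already available) facts that replacements at distinct edges commute, that $G[e/H'][\cdots]=G[e/(H'[\cdots])]$, and that relabelling an edge does not affect a later replacement of it, and one must carefully track which copy of which edge the permuted rule should now act on --- most delicately for $(\div\to)$, whose conclusion simultaneously inserts $D$, relabels the \$-edge, and substitutes into the remaining edges $d_i$. No genuinely new idea beyond the string proof is required, only this careful handling of the hypergraph operations.
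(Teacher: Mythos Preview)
Your proposal is correct and matches the approach the paper has in mind: the paper does not spell out a proof here but simply notes that the result ``can be proved in a similar way as for strings'' with a reference to \cite{Pshenitsyn21}, i.e.\ precisely the Gentzen-style double induction (on the complexity of the cut formula and on derivation height) that you outline, with the principal and commutative cases handled via the commutativity/associativity of hyperedge replacement. Your identification of the main bookkeeping burden---tracking edge copies through nested replacements, especially in the $(\div\to)$ case---is exactly the point where the hypergraph version requires more care than the string one, but no new idea.
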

	\begin{proposition}[reversibility of $(\times\to)$ and $(\to\div)$]\label{prop_reversibility}\leavevmode
		\begin{enumerate}
			\item If $\mathrm{HL}\vdash H\to C$ and $e_0\in E_H$ is labeled by $\times(M)$, then $\mathrm{HL}\vdash H[e_0/M]\to C$.
			\item If $\mathrm{HL}\vdash H\to N\div D$ and $e_0\in E_D$ is labeled by $\$$, then $\mathrm{HL}\vdash D[e_0/H]\to N$.
		\end{enumerate}
	\end{proposition}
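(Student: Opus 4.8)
The plan is to prove both parts by induction on the given $\mathrm{HL}$-derivation, distinguishing cases on its last rule — exactly the argument establishing invertibility of $(\to/)$, $(\to\backslash)$ and $(\cdot\to)$ in $\mathrm{L}$, carried over to hypergraphs with concatenation replaced by hyperedge replacement. In each part there is a single ``key'' case, where the rule being reversed was just applied at the distinguished edge, so that the desired sequent is literally a premise; in all other cases the distinguished edge sits strictly inside some premise, the induction hypothesis is applied there, and the last rule is re-applied.

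\emph{Second part.} Assume $\mathrm{HL}\vdash H\to N\div D$ with $e_0\in E_D$ labeled by $\$$. Since the axiom and $(\to\times)$ produce a primitive type and a $\times$-type in the succedent respectively, they cannot be the last rule, and those cases are vacuous. If the last rule is $(\to\div)$, its unique premise is $D[e_0/H]\to N$ — this is the key case, and there is nothing more to prove. If it is $(\div\to)$ or $(\times\to)$, then $N\div D$ is the succedent of one of the premises; applying the induction hypothesis to that premise and then re-applying the same rule instance (now acting on the copy of the relevant edge produced inside the substituted subgraph) yields $D[e_0/H]\to N$, because substitution at $e_0\in E_D$ commutes with the replacements and relabelings that the rule performs on the antecedent (this is the order-independence of iterated hyperedge replacement recalled after Definition~\ref{ssec_repl}).

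\emph{First part.} Assume $\mathrm{HL}\vdash H\to C$ with $e_0\in E_H$ labeled by $\times(M)$. The axiom is vacuous, $p^\bullet$ having no $\times$-labeled edge. The key case is the last rule being $(\times\to)$ applied at $e_0$ itself: the side type $\times(F)$ of that rule then equals $\times(M)$, so $F=M$ and the premise is $H[e_0/M]\to C$. In every other case — $(\div\to)$, $(\to\div)$, $(\to\times)$, and $(\times\to)$ at an edge other than $e_0$ — the edge $e_0$ lies either in the antecedent of a premise or inside a graph substituted by the rule (it is never among the freshly created edges, such as the $N'\div D'$-labeled edge made by $(\div\to)$, because those have the wrong outermost connective); one applies the induction hypothesis to the corresponding premise and reassembles with the same rule, again using that $[e_0/M]$ commutes with the rule's operations on the antecedent.

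The argument involves no genuine difficulty; the only work is hypergraph bookkeeping — for each rule, locating the distinguished edge precisely and verifying the ensuing equalities between graphs. I expect this to be the main (and essentially the only) obstacle: the string-case reasoning is almost verbatim, but the hypergraph substitution/relabeling identities need care to state correctly. For completeness I note a shorter alternative via cut elimination (Theorem~\ref{th_cut}): the second part follows by cutting $H\to N\div D$ against the derivable sequent $D[e_0\eqdef N\div D]\to N$, and the first by cutting $M\to\times(M)$ against $H\to C$; this route additionally requires the generalized identity $A^\bullet\to A$, itself proved by a routine induction on the type $A$.
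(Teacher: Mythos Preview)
Your proposal is correct and matches what the paper indicates: the paper does not spell out a proof here but simply states that these results ``can be proved in a similar way as for strings'' (referring to \cite{Pshenitsyn21}), and your induction on the derivation with case analysis on the last rule is precisely that string-case argument transported to hypergraphs. Your alternative route via cut elimination and the generalized identity $A^\bullet\to A$ is also valid and standard.
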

	These statements will be used in proofs of some results in this paper.
	
	\section{Language Models for $\mathrm{HL}$}\label{sec_model}
	In previous sections, we looked at types of the Lambek calculus (either in its string or in its hypergraph versions) as hypergraph languages; divisions and product were interpreted as operations on languages. In this section, we are going to formalize this idea for $\mathrm{HL}$ in a way similar to what we have done in Section \ref{sec_intr_cfg_lg} and in Examples \ref{ex_prodtorel} and \ref{ex_strstr}.
	\begin{definition}\label{def_val_hypergraph}
		Given a ranked alphabet $\Sigma$, we call a function $w:Pr\to \mathcal{P}(\mathcal{H}(\Sigma))$ \emph{a valuation} if for each $p\in Pr$ $rank(H)=rank(p)$ whenever $H\in w(p)$. This function assigns a hypergraph language to each primitive type. Its extension $\overline{w}$ is the following function from the set of hypergraph types to $\mathcal{P}(\mathcal{H}(\Sigma))$:
		\begin{enumerate}
			\item\label{def_val_hypergraph_div} Let $N\div D$ be a type and let $E_D=\{d_0,\dots, d_k\}$, $lab_D(d_0)=\$$, $lab_D(d_i)=T_i$. Then $\overline{w}(N\div D)$ consists of all graphs $G$ such that $D[d_0/G,d_1/H_1,\dots, d_k/H_k]$ belongs to $\overline{w}(N)$ whenever $H_1\in \overline{w}(T_1)$, ..., $H_k\in \overline{w}(T_k)$.
			\item Let $\times(M)$ be a type and let $E_M=\{m_1,\dots,m_l\}$, $lab_M(m_i)=T_i$. Then $\overline{w}(\times(M))$ consists of all graphs of the form $M[m_1/H_1,\dots, m_l/H_l]$ for all $H_i\in \overline{w}(T_i)$.
			\item We additionally define $\overline{w}(H\to A)$ as the statement $\overline{w}(\times(H))\subseteq \overline{w}(A)$.
		\end{enumerate}
	\end{definition}
	Thus we defined language models (or L-models) for the hypergraph Lambek calculus. Now we formulate some standard model-theoretic results (their proof in the hypergraph case does not differ from that in the string case).
	\begin{theorem}\label{th_correctness}
		If $\mathrm{HL}\vdash H\to A$, then $\overline{w}(H\to A)$ is true for each valuation $w$.
	\end{theorem}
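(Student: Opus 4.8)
The plan is to proceed by induction on the derivation of $H\to A$ in $\mathrm{HL}$, checking that the axiom and each of the four rules is \emph{sound}, i.e.\ that whenever $\overline{w}$ makes all premises true it makes the conclusion true. The only fact used beyond literally unfolding Definition~\ref{def_val_hypergraph} is the one recalled in Section~\ref{sec_prelim}: a family of hyperedge replacements in a graph can be performed in any order, or simultaneously, with the same result. Every inductive step below amounts to identifying the edges of a composite antecedent with the components from which it was assembled and then regrouping the chosen replacement graphs with the help of this associativity. For the base case $p^\bullet\to p$: replacing the unique, $p$-labelled edge of the handle $p^\bullet$ by a graph $X$ with $rank(X)=rank(p)$ produces a graph isomorphic to $X$, since the attachment nodes of that edge are exactly the external nodes of $p^\bullet$; hence $\overline{w}(\times(p^\bullet))=\overline{w}(p)$, and in particular $\overline{w}(\times(p^\bullet))\subseteq\overline{w}(p)$.

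For $(\to\times)$, a graph in $\overline{w}(\times(M[m_1/H_1,\dots,m_l/H_l]))$ arises from $M$ by substituting into each $m_i$ an instance $G_i$ of $H_i$ (obtained by replacing the edges of $H_i$ by graphs in the languages of their labels); by the premises $G_i\in\overline{w}(lab(m_i))$, so the graph lies in $\overline{w}(\times(M))$. For $(\times\to)$, an instance of $\times(G)$ replaces the distinguished edge $e$ by an instance of $\times(F)$, which by associativity of replacement is the same as an instance of $\times(G[e/F])$, hence lies in $\overline{w}(A)$ by the premise.

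For $(\to\div)$: given $G\in\overline{w}(\times(F))$ and arbitrary $H_i\in\overline{w}(T_i)$, the graph $D[d_0/G,d_1/H_1,\dots,d_k/H_k]$ can be rewritten, again by associativity, as an instance of $\times(D[e_0/F])$ --- the edges of the inserted copy of $F$ being replaced exactly as they appear inside $G$, and each $d_i$ by $H_i$; the premise $D[e_0/F]\to N$ then places it in $\overline{w}(N)$, which is precisely the condition defining membership of $G$ in $\overline{w}(N\div D)$. For $(\div\to)$, take an instance $X$ of $\times$ of the conclusion antecedent $H[e/D][d_0\eqdef N\div D][d_1/H_1,\dots,d_k/H_k]$. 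It decomposes so that the block coming from the relabelled \$-edge is filled by some $G_0\in\overline{w}(N\div D)$, each block coming from $H_i$ is filled by some $G_i\in\overline{w}(\times(H_i))\subseteq\overline{w}(T_i)$, and the edges coming from $H$ other than $e$ are filled by graphs of their labels' languages. Applying the defining property of $\overline{w}(N\div D)$ to $G_0$ and the $G_i$ shows $Z\eqdef D[d_0/G_0,d_1/G_1,\dots,d_k/G_k]\in\overline{w}(N)$; substituting $Z$ for $e$ then exhibits $X$ as an instance of $\times(H)$, and the premise $H\to A$ gives $X\in\overline{w}(A)$.

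I expect the only real obstacle to be the $(\div\to)$ step: one has to set up carefully the bijection between the edges of the rather involved conclusion antecedent and the edges of $H$, $D$, $H_1,\dots,H_k$, and regroup the replacement graphs so that the inner part matches the shape $D[d_0/\,\cdot\,,d_1/\,\cdot\,,\dots]$ demanded by clause~\ref{def_val_hypergraph_div} of Definition~\ref{def_val_hypergraph}; everything else is routine. The degenerate cases in which $M$, $D$, or some $H_i$ has no edges are covered by the same formulas with empty index sets. Altogether this mirrors the classical soundness proof for $\mathrm{L}$.
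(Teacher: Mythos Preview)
Your proposal is correct and follows exactly the approach the paper indicates: the paper's proof is the single sentence ``This theorem is proved by a straightforward induction on the length of a derivation,'' and you have carried out precisely that induction, checking soundness of the axiom and of each of the four rules. Your treatment of the $(\div\to)$ case, which you rightly flag as the most delicate, is accurate; the only tool needed beyond Definition~\ref{def_val_hypergraph} is indeed the order-independence of hyperedge replacement recalled in Section~\ref{sec_prelim}.
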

	This theorem is proved by a straightforward induction on the length of a derivation. The other direction (i.e. completeness) is an open question (in the string case, this direction was a hard open problem until it was proved in \cite{Pentus95}). We expect that it holds in the hypergraph case but we have no idea how to generalize the proof from \cite{Pentus95}. However, if we consider the product-free fragment of $\mathrm{HL}$ (that is, we will consider types with $\div$ only), then the completeness theorem can be easily proved using the canonical model. In the string case, the canonical model is the following: given a type $A$, we assign the set of antecedents $\Pi$ such that $\Pi \to A$ is derivable to $A$ considering types as symbols of an (infinite) alphabet.
	\begin{theorem}\label{th_completeness}
		If $\overline{w}(H\to A)$ is true for each valuation $w$, and types in $H\to A$ do not contain $\times$, then $\mathrm{HL}\vdash H\to A$.
	\end{theorem}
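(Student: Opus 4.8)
The plan is to prove completeness by the \textbf{canonical model} construction, transposing the classical argument for $\mathrm{L}$ to hypergraphs. Take as ranked alphabet $\Sigma$ the set of all $\times$-free types of $\mathrm{HL}$, each keeping its own rank, and define the \emph{canonical valuation} $w_0$ by $w_0(p)\eqdef\{H\in\mathcal{H}(\Sigma)\mid\mathrm{HL}\vdash H\to p\}$ for $p\in Pr$; this is a legitimate valuation since a derivable sequent $H\to p$ forces $rank(H)=rank(p)$. The heart of the proof is the \emph{Truth Lemma}: for every $\times$-free type $A$, $\overline{w_0}(A)=\{H\in\mathcal{H}(\Sigma)\mid\mathrm{HL}\vdash H\to A\}$. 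Granting it, the theorem follows at once: if $\overline{w}(H\to A)$ holds for all $w$, then in particular $\overline{w_0}(\times(H))\subseteq\overline{w_0}(A)$; now $H$ is obtained from the graph $H$ by replacing every edge $m$ by the handle $lab_H(m)^\bullet$, and each such handle lies in $\overline{w_0}(lab_H(m))$ by the Truth Lemma (applicable since all labels of $H$ are $\times$-free) together with the reflexivity lemma below, so $H\in\overline{w_0}(\times(H))\subseteq\overline{w_0}(A)$, whence $\mathrm{HL}\vdash H\to A$ by the Truth Lemma applied to $A$.

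Before the Truth Lemma I would establish the \emph{reflexivity lemma}: $\mathrm{HL}\vdash A^\bullet\to A$ for every $\times$-free type $A$ (the hypergraph analogue of the generalized axiom $A\to A$ of $\mathrm{L}$), by a straightforward induction on $A$ using $(\div\to)$ and $(\to\div)$. The Truth Lemma is then proved by induction on the structure of $A$. The base case $A=p$ is the definition of $w_0$, and the case $\times(M)$ does not occur since such a type is not $\times$-free. For $A=N\div D$, write $d_0$ for the \$-labeled edge of $D$ and $d_1,\dots,d_k$ for the remaining edges, labeled by types $T_1,\dots,T_k$. For $\{H\mid\mathrm{HL}\vdash H\to N\div D\}\subseteq\overline{w_0}(N\div D)$: given $G$ with $\mathrm{HL}\vdash G\to N\div D$ and arbitrary $H_i\in\overline{w_0}(T_i)$, i.e. $\mathrm{HL}\vdash H_i\to T_i$ by the induction hypothesis, Proposition~\ref{prop_reversibility}(2) yields $\mathrm{HL}\vdash D[d_0/G]\to N$, and $k$ applications of cut (Theorem~\ref{th_cut}) at the edges $d_1,\dots,d_k$ give $\mathrm{HL}\vdash D[d_0/G,d_1/H_1,\dots,d_k/H_k]\to N$, which by the induction hypothesis for $N$ means $D[d_0/G,d_1/H_1,\dots,d_k/H_k]\in\overline{w_0}(N)$; hence $G\in\overline{w_0}(N\div D)$. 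For the reverse inclusion: given $G\in\overline{w_0}(N\div D)$, instantiate each $H_i$ by the handle $T_i^\bullet$, which lies in $\overline{w_0}(T_i)$ by reflexivity and the induction hypothesis; since replacing an edge by its own handle changes nothing up to isomorphism, $D[d_0/G,d_1/T_1^\bullet,\dots,d_k/T_k^\bullet]=D[d_0/G]$, so $\mathrm{HL}\vdash D[d_0/G]\to N$ by the induction hypothesis for $N$, and $(\to\div)$ gives $\mathrm{HL}\vdash G\to N\div D$.

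Once Theorem~\ref{th_cut} and Proposition~\ref{prop_reversibility} are available the argument is mostly bookkeeping, and I expect the main obstacle to be exactly that bookkeeping: the reflexivity lemma, and keeping the simultaneous hyperedge replacements (the $D[d_0/G,d_1/H_1,\dots,d_k/H_k]$ appearing in $\overline{w_0}$, their compatibility with iterated cuts, and the ``replace an edge by its handle'' identity) correct, since in the hypergraph setting the combinatorics of attachment nodes and nested replacements is less transparent than concatenation of strings. A secondary point to verify is that restricting $\Sigma$ to $\times$-free types is harmless, i.e. that every type occurring anywhere inside a $\times$-free sequent — in particular inside a denominator $D$ — is again $\times$-free, so that $w_0$, the induction, and the final instantiation are all well posed.
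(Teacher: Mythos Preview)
Your proposal is correct and follows essentially the same canonical-model argument as the paper: take $\Sigma=\mathit{Tp}(\mathrm{HL}(\div))$, set $w_0(T)=\{G\mid \mathrm{HL}\vdash G\to T\}$, prove $\overline{w_0}=w_0$ on $\times$-free types by induction using reversibility of $(\to\div)$, reflexivity $T^\bullet\to T$, and cut, then conclude from $H\in\overline{w_0}(\times(H))$. Your final step is in fact cleaner than the paper's: the paper writes ``$H\in w_0(\times(H))$ since $\mathrm{HL}\vdash H\to\times(H)$'', which tacitly invokes the Truth Lemma at the non-$\times$-free type $\times(H)$, whereas you correctly argue directly from the definition of $\overline{w_0}(\times(H))$ together with $T_i^\bullet\in\overline{w_0}(T_i)$.
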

	\begin{proof}
		Let us denote the fragment of $\mathrm{HL}$, in which we consider only types without $\times$, as $\mathrm{HL}(\div)$. We fix the alphabet $\Sigma=\mathit{Tp}(\mathrm{HL}(\div))$ (i.e. types without $\times$ are now symbols of the alphabet) and introduce a valuation $w_0$ for all primitive types: $w_0(p)=\{G\in\mathcal{H}(\Sigma)\mid \mathrm{HL}\vdash G\to p\}$. Note that such a definition of $w_0$ can be considered not only for primitive types $p$ but for all types $T\in \mathit{Tp}(\mathrm{HL}(\div))$. We claim that $\overline{w_0}(T)=w_0(T)$ for all such types; that is, the function $\overline{w_0}$ obtained from Definition \ref{def_val_hypergraph} coincides with $w_0$. Indeed: if $T=N\div D$, $lab_D(d_0)=\$$, and $lab_D(d_i)=T_i$ ($i=1,\dots, k$), then 
		\begin{center}
			$G\in w_0(N\div D)$ $\Leftrightarrow$ $\mathrm{HL}\vdash G\to N\div D$ $\Leftrightarrow$ $\mathrm{HL}\vdash D[d_0/G]\to N$ (see Proposition \ref{prop_reversibility}) $\Leftrightarrow$ \\$\Leftrightarrow$ $\forall H_i: (\mathrm{HL} \vdash H_i \to T_i,\; i=1,\dots, k) \quad \mathrm{HL}\vdash D[d_0/G,d_1/H_1,\dots, d_k/H_k]\to N$ $\Leftrightarrow$ $G\in \overline{w_0}(N\div D)$.
		\end{center}
		The penultimate equivalence follows from the fact that $\mathrm{HL}\vdash T_i^\bullet\to T_i$ and from the cut elimination Theorem \ref{th_cut}. Since $\overline{w_0}(H\to A)$ is true, $w_0(\times(H))\subseteq w_0(A)$; $H$ belongs to $w_0(\times(H))$ ($\mathrm{HL}\vdash H\to \times(H)$), hence $H$ belongs to $w_0(A)$. By the definition of $w_0$, this yields that $\mathrm{HL}\vdash H\to A$.
	\end{proof}
	Unfortunately, a similar proof does not work for $\mathrm{HL}$ with $\times$ (like in the string case). Thus, there is much space for further investigations. Nevertheless, Theorem \ref{th_correctness} and Theorem \ref{th_completeness} partially justify the periphrastic name of the hypergraph Lambek calculus given in the title: it is a logic of hypergraph languages.

	\section{Hypergraph Lambek Grammars and Their Power}\label{sec_HLG}
	Although in this paper we devote a great deal of attention to the hypergraph Lambek calculus itself and to its model-theoretic motivation, our main goal is to study the concept of \emph{hypergraph Lambek grammars} (HL-grammars in short). They are defined in a similar way to Lambek grammars (in the string case). A grammar is essentially a finite set of correspondences of the form $a\corr T$ where $a$ is a terminal label, and $T$ is a type; besides, in a grammar some type $S$ (not necessarily primitive) is distinguished. Then a hypergraph $G$ belongs to the language generated by the grammar if we can replace labels of its hyperedges by corresponding types (let us denote the resulting graph $G^\prime$) and derive the sequent $G^\prime \to S$ in $\mathrm{HL}$. 
	
	We consider a ranked alphabet $\Sigma$.
	\begin{definition}
		A \emph{hypergraph Lambek grammar (HL-grammar)} is a tuple $HGr=\langle \Sigma, S, \corr\rangle$ where $\Sigma$ is a finite set (alphabet), $S\in \mathit{Tp}({\mathrm{HL}})$ is a distinguished type, and $\corr\;\subseteq\Sigma\times \mathit{Tp}({\mathrm{HL}})$ is a finite binary relation such that $a\corr T$ implies $rank(a)=rank(T)$.
	\end{definition}
	\begin{definition}
		\emph{The type set} of an HL-grammar $HGr=\langle \Sigma, S, \corr\rangle$ is the set $\ts(HGr)=\{T \mid \exists a\in \Sigma : a\corr T\}$.
	\end{definition}
	\begin{definition}
		\emph{The language $L(HGr)$ generated by a hypergraph Lambek grammar} $HGr=\langle \Sigma, S, \corr\rangle$ is the set of all hypergraphs $G\in\mathcal{H}(\Sigma)$ for which a function $f_G:E_G\to \mathit{Tp}(\mathrm{HL})$ exists such that:
		\begin{enumerate}
			\item $lab_G(e)\corr f_G(e)$ whenever $e\in E_G$;
			\item $\mathrm{HL}\vdash f_G(G)\to S$.
		\end{enumerate}
	\end{definition}
	\begin{example}\label{ex_sgr}
		The HL-grammar $SGr=\langle\{a,b\},s,\corr\rangle$ where $s$ is primitive, and
		\begin{itemize}
			\item $a\corr s\div \SG(\$sp)=Q$,
		\item $b\corr p$, $b\corr s$
		\end{itemize}
		generates the language $\{\SG(a^nb^{n+1}) \mid n\ge 0\}$. For example, if one wants to check that $G=\SG(aabbb)$ belongs to $L(SGr)$, he/she follows such steps:
		\begin{enumerate}
			\item Relabel each edge in $G$ in such a way that each label is replaced by a type corresponding to it. We do this as follows: $G=\SG(aabbb) \rightsquigarrow f_G(G)=\SG(QQspp)$.
			\item Consider the sequent $f_G(G)\to s$ and derive it in $\mathrm{HL}$:
			$$
			\infer[(\div\to)]
			{
				\SG(QQspp) \to s
			}
			{
				\infer[(\div\to)]
				{
					\SG(Qsp) \to s
				}
				{
					s^\bullet \to s
					&
					s^\bullet \to s
					&
					p^\bullet \to p
				}
				&
				s^\bullet \to s
				&
				p^\bullet \to p
			}
			$$
		\end{enumerate}
		Now, notice the following: if a sequent $G^\prime\to s$ is derivable, and $G^\prime$ is labeled only by types from the type set of $SGr$, then each derivation of $G^\prime\to s$ consists only of applications of $(\div\to)$. The rule $(\div\to)$ consists of several replacements in the antecedent of a sequent, and hence the grammar $SGr$ works in a way similar to the hyperedge replacement grammar with the following set of productions:
		$$
		S\to \SG(aSP)
		\qquad
		S\to \SG(b)
		\qquad
		P\to \SG(b)
		$$
		Note that the conversion of this HRG back into $SGr$ can be made according to the principles explained in Section \ref{sec_HL}. The new grammar is actually a graph variant of a context-free grammar with the productions $S\to aSP, S\to b, P\to b$, which, clearly, generates the language $\{a^nb^{n+1} \mid n\ge 0\}$.
	\end{example}
	The transformation considered in Example \ref{ex_prodtorel} and in Example \ref{ex_sgr} is possible, if there is exactly one terminal label (say $a$) in a production; then we place \$ instead of $a$, and establish a correspondence $\corr$ between $a$ and a type made on the basis of this production.
	\begin{definition}
		An HRG $HGr$ is in \emph{the weak Greibach normal form} if there is exactly one terminal edge in the right-hand side of each production.
	\end{definition}
	Denote by $isize(H)$ the number of isolated nodes in $H$.
	\begin{definition}
		A hypergraph language $L$ is isolated-node bounded if there is a constant $M>0$ such that for each $H\in L$ $isize(H)<M\cdot |E_H|$.
	\end{definition}
	In \cite{Pshenitsyn20_1} we prove the following
	\begin{theorem}\label{WGNF}
		For each HRG generating an isolated-node bounded language there is an equivalent HRG in the weak Greibach normal form.
	\end{theorem}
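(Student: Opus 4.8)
\emph{Proof proposal.} The plan is to transcribe the classical Greibach‑normal‑form construction for context‑free grammars to hyperedge replacement, using isolated‑node boundedness at exactly the point where the context‑free argument removes $\varepsilon$‑productions. Fix an HRG $HGr=\langle N,T,P,S\rangle$ with $L=L(HGr)$ isolated‑node bounded via a constant $M$. Observe first that no graph of $L$ can have zero edges, since for $E_H=\emptyset$ the defining inequality $isize(H)<M\cdot|E_H|$ would read $isize(H)<0$; so it suffices to build a weak‑Greibach HRG with the same language.

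First I would \textbf{reduce} the grammar in the standard way: discard every nonterminal that is unreachable from $S$ or from which no terminal graph is derivable, and remove \emph{chain productions} whose right‑hand side is a single nonterminal handle $B^\bullet$ by composing them with the productions for $B$ (finitely many new productions, since $N$ is finite). Afterwards every nonterminal occurs in some complete derivation.

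The heart of the argument is \textbf{eliminating edgeless productions}. Call a graph $R$ over $N\cup T$ edgeless when $E_R=\emptyset$; up to isomorphism such an $R$ is just $rank(R)$ external nodes together with some number of isolated nodes. For a nonterminal $A$ let $I(A)\subseteq\mathbb{N}$ be the set of isolated‑node counts of edgeless graphs derivable from the handle $A^\bullet$. I claim each $I(A)$ is \emph{finite}. Since the reduced grammar has $A$ reachable, fix a derivation of some $H\in L$ that uses an $A$‑labelled edge $e$; by the order‑independence of hyperedge replacement recalled after Definition \ref{ssec_repl}, the subderivation at $e$ is independent of the rest, so replacing it by an edgeless derivation of $A^\bullet$ with $j$ isolated nodes produces $H_j\in L$ with $|E_{H_j}|$ independent of $j$ but $isize(H_j)\ge j$; an infinite $I(A)$ would contradict isolated‑node boundedness. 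Using finiteness of all the $I(A)$, I would delete every production with an edgeless right‑hand side and compensate: for each surviving production $A\to R$, each subset $F$ of its nonterminal edges, and each choice of a value $j_e\in I(lab_R(e))$ for every $e\in F$, add the production obtained from $A\to R$ by deleting the edges of $F$ and inserting the corresponding isolated nodes. Only finitely many productions are added, and order‑independence of replacement shows $L$ is preserved (no special case for $S$ is needed, since an edgeless derivation from $S$ would exhibit an edge‑free graph of $L$). After re‑reducing, every right‑hand side has at least one edge, hence every nonterminal derives a graph containing a terminal edge, because terminal edges never disappear under replacement.

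It remains to \textbf{bring a terminal edge into every production and split off the surplus}. The first half is the hypergraph analog of the ordered‑substitution phase of the Greibach construction: fix a linear order $A_1<\dots<A_m$ on $N$ and, for $i=1,\dots,m$ in turn, substitute the right‑hand sides of the $A_j$ with $j<i$ into the terminal‑free productions for $A_i$, then break the resulting recursion of $A_i$ on itself by introducing a fresh nonterminal --- the hypergraph counterpart of converting left recursion to right recursion, legitimate because replacement is order‑independent --- and finally back‑substitute from $A_m$ downward; the finite ordering forces termination, and the outcome is that every production contains a terminal edge while $L$ is unchanged. The second half is immediate: if $A\to R$ has terminal edges $t_1,\dots,t_k$ with $k\ge 2$, introduce fresh nonterminals $B_2,\dots,B_k$ with $rank(B_i)=rank_R(t_i)$, add the productions $B_i\to a_i^\bullet$ where $a_i=lab_R(t_i)$, and relabel each $t_i$ with $i\ge 2$ in $R$ by $B_i$; every production then has exactly one terminal edge, and since each $B_i$ derives exactly $a_i^\bullet$ the language does not change. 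The hardest part will be the edgeless‑production step --- both the bound on each $I(A)$ and the proof that compiling the edgeless derivatives into the remaining productions preserves $L$ exactly --- together with making the recursion elimination in the last step precise for hypergraphs (where there is no linear left‑to‑right structure to exploit); everything else is a routine hypergraph transcription of the context‑free proof.
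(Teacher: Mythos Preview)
The paper does not actually prove this theorem: it is stated with the preface ``In \cite{Pshenitsyn20_1} we prove the following'' and no argument is given in the present paper. There is therefore no in-paper proof to compare your proposal against.

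That said, a brief remark on your outline. The reduction step, the finiteness of each $I(A)$ via isolated-node boundedness, and the final splitting of surplus terminal edges are all sound and are indeed the natural hypergraph transcriptions of the classical steps. The genuinely delicate point --- which you flag yourself --- is the ``ordered-substitution plus recursion-breaking'' phase. In the string case this works because ``left recursion'' is a well-defined syntactic notion tied to the leftmost symbol; for hypergraph right-hand sides there is no canonical ``first'' nonterminal edge, and simply fixing an arbitrary order on $N$ does not by itself give you a terminating substitution scheme that forces a terminal edge into every production. You would need to articulate precisely what plays the role of ``the leftmost nonterminal'' in a hypergraph and why the analog of the $A_i\to A_i\alpha$ elimination terminates and preserves the language; as written, this step is a placeholder rather than an argument. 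If you want to complete the proof along these lines, that is where the real work lies.
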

	Using it, we can prove the following theorem applying standard techniques.
	\begin{theorem}\label{th_hrg_hlg}
		For each HRG generating an isolated-node bounded language there is an equivalent HL-grammar.
	\end{theorem}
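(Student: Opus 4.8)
The plan is to reduce first to the weak Greibach normal form via Theorem~\ref{WGNF}, then apply the production‑to‑type translation illustrated in Examples~\ref{ex_prodtorel} and~\ref{ex_sgr}, and finally prove the two inclusions by a pair of inductions — one over hyperedge‑replacement derivations, one over $\mathrm{HL}$‑derivations. Concretely, I would fix an HRG $HGr=\langle N,\Sigma,P,Z\rangle$, use Theorem~\ref{WGNF} to assume every production has exactly one terminal edge on its right‑hand side, and build an HL‑grammar $HGr'=\langle\Sigma,S,\corr\rangle$ as follows. Choose for each nonterminal $X\in N$ a primitive type of rank $rank(X)$ (possible since $Pr$ has infinitely many types of each rank) and identify $X$ with it; set $S$ to be the primitive type of $Z$. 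For a production $\pi\colon X\to R$ whose unique terminal edge $e_0$ has label $a$, put $D_\pi\eqdef R[e_0\eqdef\$]$ — so the other edges of $D_\pi$ are labelled by nonterminals, i.e.\ by primitive types — and declare $a\corr X\div D_\pi$. The rank conditions are immediate: $rank(X)=rank(R)=rank(D_\pi)$ and $rank(X\div D_\pi)=rank_{D_\pi}(e_0)=rank(a)$. Note that $\pi$ is recoverable from the pair $(a,X\div D_\pi)$, since the position of $\$$ in $D_\pi$ marks the terminal edge of $R$ and $a$ is its label.

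For $L(HGr)\subseteq L(HGr')$ I would show by induction on the length of a derivation that whenever $X^\bullet\Rightarrow^{*} G$ with $G\in\mathcal H(\Sigma)$, there is an assignment $f_G\colon E_G\to\ts(HGr')$ with $lab_G(e)\corr f_G(e)$ and $\mathrm{HL}\vdash f_G(G)\to X$. In the inductive step the first production is some $\pi\colon X\to R$; by order‑independence of hyperedge replacement the remainder of the derivation factors through the nonterminal edges $e_1,\dots,e_k$ of $R$ (labelled $X_1,\dots,X_k$) into sub‑derivations $X_i^\bullet\Rightarrow^{*}G_i$ with $G=R[e_1/G_1,\dots,e_k/G_k]$, while the terminal edge $e_0$ survives into $G$. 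Setting $f_G(e_0)\eqdef X\div D_\pi$ and $f_G$ equal to $f_{G_i}$ on the $G_i$‑parts, one gets $f_G(G)=D_\pi[e_0\eqdef X\div D_\pi][e_1/f_{G_1}(G_1),\dots,e_k/f_{G_k}(G_k)]$, which is exactly the conclusion of one application of $(\div\to)$ to the premises $X^\bullet\to X$ (an axiom, since $X$ is primitive) and $f_{G_i}(G_i)\to X_i$ (the induction hypotheses), using $X^\bullet[e/D_\pi]\cong D_\pi$. Taking $X=Z$ gives the inclusion.

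For the converse $L(HGr')\subseteq L(HGr)$, the key is a derivation–normal‑form lemma: call a sequent $H\to A$ \emph{good} if $A$ is primitive and every edge of $H$ is labelled by a primitive type or by a type $X\div D_\pi\in\ts(HGr')$; then any $\mathrm{HL}$‑derivation of a good sequent uses only the axiom and the rule $(\div\to)$, and every sequent in it is again good. This follows by induction on the derivation: $(\to\div)$ and $(\to\times)$ are excluded because the succedent is primitive, $(\times\to)$ because no antecedent edge is labelled by a $\times$‑type, and in an application of $(\div\to)$ the withdrawn $N\div D$ must already occur as an antecedent label of the conclusion, hence be some $X\div D_\pi$, whose numerator and denominator labels are all primitive, so the premises remain good. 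Using this I would prove, again by induction on the (now normal‑form) derivation, that if $f(G_0)\to X$ is derivable for a graph $G_0$ over $\Sigma\cup N$ and an assignment $f$ sending each terminal edge $e$ to a type with $lab_{G_0}(e)\corr f(e)$ and each $Y$‑labelled edge to the primitive $Y$, then $X^\bullet\Rightarrow^{*}G_0$ in $HGr$. An axiom forces $G_0=X^\bullet$ (a terminal edge could not receive a primitive label, since $\ts(HGr')$ contains no primitive types). An application of $(\div\to)$ at an edge labelled $Y\div D_\pi$ decodes to the HRG step rewriting the corresponding $Y$‑labelled edge of the inductively reconstructed context by $R=D_\pi[e_0\eqdef a]$, its side premises decoding to the sub‑derivations expanding the nonterminal edges of $R$; the pieces reassemble by the nested‑replacement property of hyperedge replacement. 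Applied to $f_G(G)\to S$ this yields $Z^\bullet\Rightarrow^{*}G$.

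The main obstacle is this last decoding together with the normal‑form lemma: one must check carefully that the tree shape of an $\mathrm{HL}$‑derivation built only from $(\div\to)$ can be read off as an HRG derivation, tracking which antecedent edge is ``active'' at each step and how the inserted copies of the $D_\pi$ nest inside one another. Everything else — the reduction to WGNF, the rank bookkeeping, and the order‑ and nesting‑independence of replacements used in both directions — is routine, and is precisely what the phrase ``standard techniques'' covers.
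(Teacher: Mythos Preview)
Your proposal is correct and is precisely the ``standard techniques'' argument the paper alludes to: reduce to weak Greibach normal form via Theorem~\ref{WGNF}, translate each production $X\to R$ with unique terminal edge $e_0$ labelled $a$ into the correspondence $a\corr X\div R[e_0\eqdef\$]$ (as in Examples~\ref{ex_prodtorel} and~\ref{ex_sgr}), and verify the two inclusions by the pair of inductions you describe. The paper itself does not spell out a proof here, deferring instead to \cite{Pshenitsyn21_2}; your normal-form lemma (that only $(\div\to)$ can occur in a derivation of a good sequent) and the subsequent decoding of an $\mathrm{HL}$-derivation tree into an HRG derivation are exactly the expected content of that reference.
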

	The proof of this theorem can be found in \cite{Pshenitsyn21_2}. It is not, however, of interest in this paper; we formulate this theorem here only to show the reader that HL-grammars are not weaker than HRGs (isolated-node boundedness is a nonsubstantial limitation). Our objective now is to show that HL-grammars are more powerful than HRGs; to do this, we will introduce several examples of grammars generating languages that can be generated by no HRGs.

	\subsection{All Binary Graphs}\label{sec_lan_all}
	One of restrictions known for languages generated by HRGs is that they are of bounded connectivity (see \cite{Drewes97}); this follows from the pumping lemma (see \cite{Habel92}, Chapter IV.2). Consequently, no HRG can generate the set of all binary graphs (i.e. of usual graphs with edges of rank 2). This might seem unnatural because HRGs represent a context-free formalism, and the language of all binary graphs seems to be very simple and regular. Below we show that HL-grammars are powerful enough to generate such a language.
	
	Consider the language $\mathcal{L}_1$ of all binary graphs without isolated nodes (the empty graph is not included in $\mathcal{L}_1$ as well) over the alphabet $\{\ast\}$ ($rank(\ast)=2$) that are, besides, without external nodes. Consequently, each graph in this language has at least two nodes. Let $s,p$ be primitive types ($rank(s)=0$, $rank(p)=1$). Let us define the following types:
	$$Q_1=p,\quad
		Q_2=p\div \left(\mbox{
			{\tikz[baseline=.1ex]{
					\node[] (R) {};
					\node[node,below=1mm of R,label=below:{\scriptsize $(1)$}] (N1) {};
					\node[hyperedge,above=5mm of N1] (E1) {$\$$};
					\node[node,right=7mm of N1] (N2) {};
					\node[hyperedge,above=5mm of N2] (E2) {$p$};
					\draw[-,black] (E1) -- node[left] {\scriptsize 1} (N1);
					\draw[-,black] (E2) -- node[right] {\scriptsize 1} (N2);	
		}}}\right),\quad
		Q_3=s\div\left(\mbox{
			{\tikz[baseline=.1ex]{
					\node[] (R) {};
					\node[node,below=1mm of R] (N1) {};
					\node[hyperedge,above=5mm of N1] (E1) {$\$$};
					\node[node,right=7mm of N1] (N2) {};
					\node[hyperedge,above=5mm of N2] (E2) {$p$};
					\draw[-,black] (E1) -- node[left] {\scriptsize 1} (N1);
					\draw[-,black] (E2) -- node[right] {\scriptsize 1} (N2);	
		}}}\right);$$
		$$M_{11}^{ij}= \times\left(\mbox{
			{\tikz[baseline=.1ex]{
					\node[] (R) {};
					\node[node,below=-0.5mm of R,label=below:{\scriptsize $(1)$}] (N1) {};
					\node[hyperedge,above=5mm of N1] (E1) {$Q_i$};
					\node[node,right=7mm of N1,label=below:{\scriptsize $(2)$}] (N2) {};
					\node[hyperedge,above=5mm of N2] (E2) {$Q_j$};
					\draw[-,black] (E1) -- node[left] {\scriptsize 1} (N1);
					\draw[-,black] (E2) -- node[right] {\scriptsize 1} (N2);
			}}
		}\right),\;
		M_{12}^{i}= \times\left(\mbox{
			{\tikz[baseline=.1ex]{
					\node[] (R) {};
					\node[node,below=-0.5mm of R,label=below:{\scriptsize $(1)$}] (N1) {};
					\node[hyperedge,above=5mm of N1] (E1) {$Q_i$};
					\node[node,right=7mm of N1,label=below:{\scriptsize $(2)$}] (N2) {};
					\draw[-,black] (E1) -- node[left] {\scriptsize 1} (N1);
			}}
		}\right),
		M_{21}^{j}= \times\left(\mbox{
			{\tikz[baseline=.1ex]{
					\node[] (R) {};
					\node[node,below=-0.5mm of R,label=below:{\scriptsize $(1)$}] (N1) {};
					\node[node,right=7mm of N1,label=below:{\scriptsize $(2)$}] (N2) {};
					\node[hyperedge,above=5mm of N2] (E2) {$Q_j$};
					\draw[-,black] (E2) -- node[right] {\scriptsize 1} (N2);
			}}
		}\right),\;
		M_{22}= \times\left(\!\mbox{
			{\tikz[baseline=.1ex]{
					\node[] (R) {};
					\node[node,above left=0mm and 3mm of R,label=below:{\scriptsize $(1)$}] (N1) {};
					\node[node,right=7mm of N1,label=below:{\scriptsize $(2)$}] (N2) {};
			}}
		}\!\right).$$
	Consider a grammar $HGr_1=\langle\{\ast\}, s, \corr \rangle$ where $\ast\corr N$ whenever $N\in\{M_{11}^{ij},M_{12}^{i},M_{21}^{j},M_{22}|1\le i,j\le 3\}$.
	\begin{theorem}\label{prop_lan_all}
		$L(HGr_1)=\mathcal{L}_1$.
	\end{theorem}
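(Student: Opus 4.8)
The plan is to reduce both inclusions to a statement about very simple hypergraphs obtained by unfolding all the $\times$-types. Call a hypergraph a \emph{star} if it has exactly one node with some (possibly zero) unary edges attached. Every type in $\ts(HGr_1)$ has the form $\times(M)$ where $M$ has only its two external nodes (no internal nodes) and carries only unary edges labelled by $Q_1=p$, $Q_2$ or $Q_3$; hence for any admissible labelling $f$ of a binary hypergraph $G$, the graph $G^\flat$ obtained from $f(G)$ by replacing each $\times$-edge by its inner graph $M$ (one after another) is a disjoint union of stars on the node set $V_G$ — an edge $e=\langle v_1,v_2\rangle$ of $G$ contributes a $Q_i$-edge at $v_1$ and/or a $Q_j$-edge at $v_2$ according to which of $M_{11}^{ij},M_{12}^{i},M_{21}^{j},M_{22}$ is used. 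First I would note that, since none of the $Q_t$ is a $\times$-type, iterating Proposition~\ref{prop_reversibility}(1) gives $\mathrm{HL}\vdash f(G)\to s\Rightarrow\mathrm{HL}\vdash G^\flat\to s$, while iterating $(\times\to)$ gives the converse. Also $rank(G^\flat)=rank(f(G))=rank(s)=0$, so $G$ has no external nodes, and $V_{G^\flat}=V_G$, so a node isolated in $G$ is isolated in $G^\flat$.

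For $L(HGr_1)\subseteq\mathcal L_1$ I would prove, by induction on the derivation, the invariant: \emph{if $\Delta\to C$ is derivable in $\mathrm{HL}$, $C$ is primitive, and every edge of $\Delta$ carries a type from $\{p,Q_2,Q_3,s\}$, then $\Delta$ has no isolated node.} The key points are that the only rule applicable to such a sequent is $(\div\to)$ — $(\to\div)$ and $(\to\times)$ need a compound succedent, and $(\times\to)$ needs a $\times$-labelled edge — and that the division type used must be $Q_2=p\div D_2$ or $Q_3=s\div D_3$, whose numerators $p,s$ and denominators $D_2,D_3$ (which contain no $\div$) keep us inside $\{p,Q_2,Q_3,s\}$, the side hypergraphs of $(\div\to)$ being subgraphs of $\Delta$. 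It then remains to check that $(\div\to)$ with $Q_2$ or $Q_3$ cannot create an isolated node: in $D_2$ the external node carries the $\$$-edge, which becomes the $Q_2$-edge, so the fused host node keeps an edge, the one fresh node is where $H_1$ (which has no isolated node by induction) is glued, and $D_3$ has no external node at all; the base cases $p^\bullet\to p$, $s^\bullet\to s$ have no isolated node (the second having no nodes). Thus $G^\flat$, hence $G$, has no isolated node. Moreover $G$ has at least one edge — otherwise $G^\flat=G$, a lone node would be isolated, while the node-free graph is not the axiom $s^\bullet\to s$ and no rule applies to it — and an edge forces $|V_G|\ge 2$. Hence $G\in\mathcal L_1$.

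For $\mathcal L_1\subseteq L(HGr_1)$, given $G\in\mathcal L_1$ I would choose the labelling so that $G^\flat$ is the tidiest possible star forest. Fix distinct nodes $r,\ell\in V_G$ (possible since $|V_G|\ge 2$) and, for each node $v$, one incident edge $e(v)$ (possible since $G$ has no isolated node); set $\alpha(r)=3$, $\alpha(\ell)=1$, $\alpha(v)=2$ otherwise. For an edge $e=\langle v_1,v_2\rangle$ let $f_G(e)$ be $M_{11}^{\alpha(v_1)\alpha(v_2)}$, $M_{12}^{\alpha(v_1)}$, $M_{21}^{\alpha(v_2)}$, or $M_{22}$ according to whether $e$ equals $e(v_1)$ and/or $e(v_2)$. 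Then every star of $G^\flat$ carries exactly one edge: a $Q_3$-edge at $r$, a $p$-edge at $\ell$, and a $Q_2$-edge at each of the remaining $|V_G|-2$ nodes. To derive $G^\flat\to s$: starting from the axiom $p^\bullet\to p$ (the $p$-star at $\ell$), perform $|V_G|-2$ successive applications of $(\div\to)$ with $Q_2$, each time taking the numerator premise to be a one-node $p$-handle and the denominator premise to be the sequent obtained so far; this yields a derivable sequent whose antecedent is the disjoint union of all the $Q_2$-stars and the $p$-star at $\ell$, with succedent $p$. One final application of $(\div\to)$ with $Q_3$, with numerator premise $s^\bullet\to s$ and denominator premise this last sequent, produces exactly $G^\flat\to s$. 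Folding the $\times$-edges back by $(\times\to)$ then gives $\mathrm{HL}\vdash f_G(G)\to s$, i.e. $G\in L(HGr_1)$. (The degenerate case $|V_G|=2$, with no $Q_2$-steps, should be checked separately but goes through the same way.)

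The hard part, as always with hyperedge replacement, is bookkeeping rather than any conceptual difficulty: in each application of $(\div\to)$ one must keep precise track of which node of the denominator is the external one (hence fused with the host) and which are fresh, so that the intermediate antecedents really are the asserted star forests; and the invariant argument hinges on the exact shapes of $D_2$ and $D_3$. Everything else is routine.
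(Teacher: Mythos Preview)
Your proof is correct and follows essentially the same approach as the paper. For $\mathcal{L}_1\subseteq L(HGr_1)$ your construction is identical to the paper's: the incident-edge map $e(\cdot)$ is the paper's $h$, your $\alpha$ is the paper's $c$, and the labelling rule is literally the same; the resulting derivation also matches (the paper organises it with the $Q_3$-step first and the $Q_2$-steps afterwards along the main branch, you put the $Q_2$-steps into the side premise and finish with $Q_3$, but both build the same disjoint union of one-edge stars). For $L(HGr_1)\subseteq\mathcal{L}_1$ the paper gives only a one-line observation that isolated nodes could only be introduced by $(\div\to)$ or $(\to\times)$ and that the denominators in $\ts(HGr_1)$ contain none; your invariant on sequents with labels in $\{p,Q_2,Q_3,s\}$ and primitive succedent is a clean way to make that sentence rigorous, and your additional checks (no external nodes, nonemptiness, $|V_G|\ge 2$) fill in points the paper leaves implicit.
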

	\begin{proof}
		To prove that $L(HGr_1)\subseteq\mathcal{L}_1$ it suffices to note that denominators of types in $\ts(HGr_1)$ do not contain isolated nodes; since isolated nodes may appear only after applications of rules $(\div\to)$ or $(\to\times)$, all graphs in $L(HGr_1)$ do not have them.
		
		The other inclusion $L(HGr_1)\supseteq\mathcal{L}_1$ is of central interest. We start with an example of a specific derivation in this grammar. After, we provide the proof in a general case, but we suppose that this example is enough to understand the construction of $HGr_1$.
		\begin{example}
			Consider a binary graph
			$$
			H=\mbox{
				{\tikz[baseline=.1ex]{
						\node[] (R) {};
						\node[node,above=4mm of R] (N1) {};
						\node[node,below=4mm of R] (N2) {};
						\node[node,right=6.93mm of R] (N3) {};
						\node[node,right=10mm of N3] (N4) {};
						\draw[>=stealth,->,black] (N1) -- node[left] {$\ast$} (N2);
						\draw[>=stealth,->,black] (N1) -- node[above right] {$\ast$} (N3);
						\draw[>=stealth,->,black] (N2) -- node[below right] {$\ast$} (N3);
						\draw[>=stealth,->,black] (N4) -- node[above] {$\ast$} (N3);
				}}
			}
			$$
			In order to check that $H$ belongs to $L(HGr_1)$ we relabel it by corresponding types as follows:
			$$
			f_H(H)=\mbox{
				{\tikz[baseline=.1ex]{
						\node[] (R) {};
						\node[node,above=4mm of R] (N1) {};
						\node[node,below=4mm of R] (N2) {};
						\node[node,right=6.93mm of R] (N3) {};
						\node[node,right=15mm of N3] (N4) {};
						\draw[>=stealth,->,black] (N1) -- node[left] {$M_{11}^{32}$} (N2);
						\draw[>=stealth,->,black] (N1) -- node[above right] {$M_{21}^{2}$} (N3);
						\draw[>=stealth,->,black] (N2) -- node[below right] {$M_{22}$} (N3);
						\draw[>=stealth,->,black] (N4) -- node[below] {$M_{12}^{1}$} (N3);
				}}
			}
			$$
			Then we check derivability of $f_H(H)\to s$ (see Figure \ref{fig_der_all}). The idea behind types $M^{\cdot,\cdot}_{ij}$ is the following: each edge is replaced in the derivation (considered from bottom to top) by a pair of hyperedges of rank 1 attached to nodes (labeled by $Q_i$). Each node is intended to be attached to exactly one hyperedge; then we need to label exactly one hyperedge by $Q_1$ (from which rule applications of $(\div\to)$ should start), exactly one hyperedge by $Q_3$, and the remaining ones by $Q_2$.
			\begin{figure}
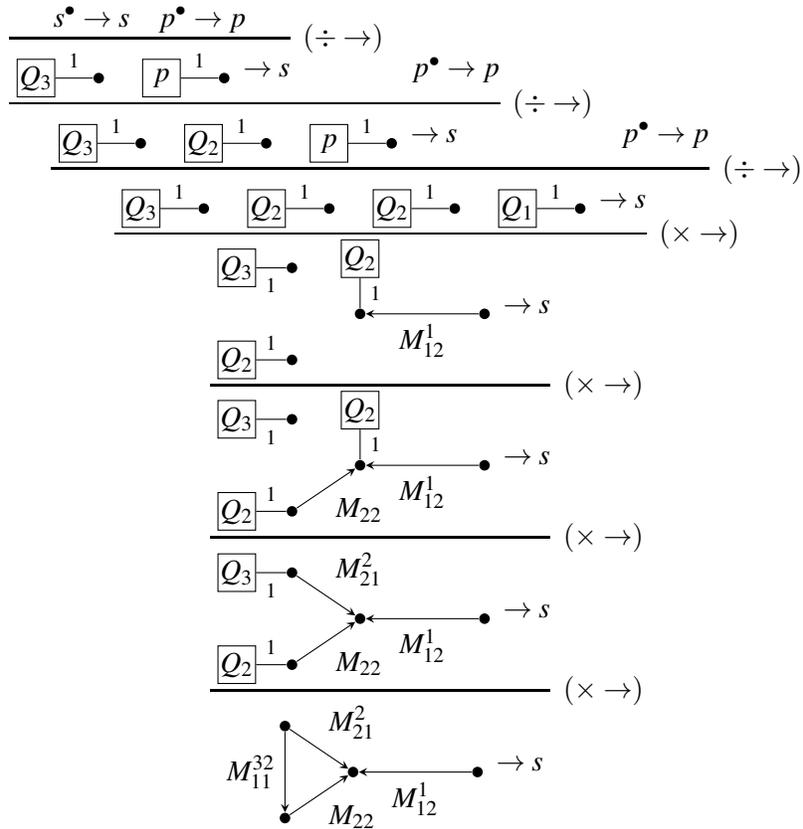

			$$
			\infer[(\times\to)]{
				\mbox{{\tikz[baseline=.1ex]{
							\node[] (R) {};
							\node[node,above=4mm of R] (N1) {};
							\node[node,below=4mm of R] (N2) {};
							\node[node,right=6.93mm of R] (N3) {};
							\node[node,right=15mm of N3] (N4) {};				
							\draw[>=stealth,->,black] (N1) -- node[left] {$M_{11}^{32}$} (N2);
							\draw[>=stealth,->,black] (N1) -- node[above right] {$M_{21}^{2}$} (N3);
							\draw[>=stealth,->,black] (N2) -- node[below right] {$M_{22}$} (N3);
							\draw[>=stealth,->,black] (N4) -- node[below] {$M_{12}^{1}$} (N3);
					}}
				}\to s}{
				\infer[(\times\to)]{
					\mbox{
						{\tikz[baseline=.1ex]{
								\node[] (R) {};
								\node[node,above=4mm of R] (N1) {};
								\node[node,below=4mm of R] (N2) {};
								\node[node,right=6.93mm of R] (N3) {};
								\node[node,right=15mm of N3] (N4) {};
								\node[hyperedge,left=4mm of N1] (E1) {$Q_3$};
								\node[hyperedge,left=4mm of N2] (E2) {$Q_2$};					
								\draw[>=stealth,->,black] (N1) -- node[above right] {$M_{21}^{2}$} (N3);
								\draw[>=stealth,->,black] (N2) -- node[below right] {$M_{22}$} (N3);
								\draw[>=stealth,->,black] (N4) -- node[below] {$M_{12}^{1}$} (N3);
								\draw[-,black] (E1) -- node[below] {\scriptsize 1} (N1);
								\draw[-,black] (E2) -- node[above] {\scriptsize 1} (N2);
						}}
					}\to s
				}{
					\infer[(\times\to)]{
						\mbox{
							{\tikz[baseline=.1ex]{
									\node[] (R) {};
									\node[node,above=4mm of R] (N1) {};
									\node[node,below=4mm of R] (N2) {};
									\node[node,right=6.93mm of R] (N3) {};
									\node[node,right=15mm of N3] (N4) {};
									\node[hyperedge,left=4mm of N1] (E1) {$Q_3$};
									\node[hyperedge,left=4mm of N2] (E2) {$Q_2$};					
									\node[hyperedge,above=4mm of N3] (E3) {$Q_2$};					
									\draw[>=stealth,->,black] (N2) -- node[below right] {$M_{22}$} (N3);
									\draw[>=stealth,->,black] (N4) -- node[below] {$M_{12}^{1}$} (N3);
									\draw[-,black] (E1) -- node[below] {\scriptsize 1} (N1);
									\draw[-,black] (E2) -- node[above] {\scriptsize 1} (N2);
									\draw[-,black] (E3) -- node[right] {\scriptsize 1} (N3);
							}}
						}\to s}{
						\infer[(\times\to)]{
							\mbox{
								{\tikz[baseline=.1ex]{
										\node[] (R) {};
										\node[node,above=4mm of R] (N1) {};
										\node[node,below=4mm of R] (N2) {};
										\node[node,right=6.93mm of R] (N3) {};
										\node[node,right=15mm of N3] (N4) {};
										\node[hyperedge,left=4mm of N1] (E1) {$Q_3$};
										\node[hyperedge,left=4mm of N2] (E2) {$Q_2$};					
										\node[hyperedge,above=4mm of N3] (E3) {$Q_2$};					
										\draw[>=stealth,->,black] (N4) -- node[below] {$M_{12}^{1}$} (N3);
										\draw[-,black] (E1) -- node[below] {\scriptsize 1} (N1);
										\draw[-,black] (E2) -- node[above] {\scriptsize 1} (N2);
										\draw[-,black] (E3) -- node[right] {\scriptsize 1} (N3);
								}}
							}\to s}{
							\infer[(\div\to)]{
								\mbox{
									{\tikz[baseline=.1ex]{
											\node[] (R) {};
											\node[hyperedge] (E1) {$Q_3$};
											\node[node,right=5mm of E1] (N1) {};
											\node[hyperedge,right=5mm of N1] (E2) {$Q_2$};					
											\node[node,right=5mm of 
											E2] (N2) {};
											\node[hyperedge,right=5mm of N2] (E3) {$Q_2$};					
											\node[node,right=5mm of E3] (N3) {};
											\node[hyperedge,right=5mm of N3] (E4) {$Q_1$};
											\node[node,right=5mm of E4] (N4) {};
											\draw[-,black] (E1) -- node[above] {\scriptsize 1} (N1);
											\draw[-,black] (E2) -- node[above] {\scriptsize 1} (N2);
											\draw[-,black] (E3) -- node[above] {\scriptsize 1} (N3);
											\draw[-,black] (E4) -- node[above] {\scriptsize 1} (N4);
									}}
								}\to s}{
								\infer[(\div\to)]{
									\mbox{
										{\tikz[baseline=.1ex]{
												\node[] (R) {};
												\node[hyperedge] (E1) {$Q_3$};
												\node[node,right=5mm of E1] (N1) {};
												\node[hyperedge,right=5mm of N1] (E2) {$Q_2$};					
												\node[node,right=5mm of 
												E2] (N2) {};
												\node[hyperedge,right=5mm of N2] (E3) {$p$};					
												\node[node,right=5mm of E3] (N3) {};
												\draw[-,black] (E1) -- node[above] {\scriptsize 1} (N1);
												\draw[-,black] (E2) -- node[above] {\scriptsize 1} (N2);
												\draw[-,black] (E3) -- node[above] {\scriptsize 1} (N3);
										}}
									}\to s}{
									\infer[(\div\to)]{
										\mbox{
											{\tikz[baseline=.1ex]{
													\node[] (R) {};
													\node[hyperedge] (E1) {$Q_3$};
													\node[node,right=5mm of E1] (N1) {};
													\node[hyperedge,right=5mm of N1] (E2) {$p$};					
													\node[node,right=5mm of 
													E2] (N2) {};
													\draw[-,black] (E1) -- node[above] {\scriptsize 1} (N1);
													\draw[-,black] (E2) -- node[above] {\scriptsize 1} (N2);
											}}
										}\to s}{s^\bullet \to s & p^\bullet \to p} & p^\bullet \to p
								} & p^\bullet \to p
							}
						}
					}
				}
			}
			$$
			\caption{Derivation of a sequent corresponding to the binary graph $H$.}\label{fig_der_all}
		\end{figure}
		\end{example}
		In general, let $H$ be in $\mathcal{L}_1$. Since there are no isolated nodes in $H$ there exists a function $h:V_H\to E_H$ such that $h(v)$ is attached to $v$ whenever $v\in V_H$. We choose two arbitrary nodes $v_b$ (begin) and $v_e$ (end) such that $v_b\ne v_e$. After that, we define a function $c:V_H\to \{1,2,3\}$ as follows: $c(v_b)=1$, $c(v_e)=3$, and $c(v)=2$ whenever $v\not\in\{v_b,v_e\}$.
		
		Now we present a relabeling $f_H:E_H\to \mathit{Tp}(\mathrm{HL})$. Let $e$ belong to $E_H$ and let $att_H(e)=v_1v_2$.
		\begin{multicols}{2}
			\begin{itemize}[leftmargin=*]
				\item If $h(v_1)=h(v_2)=e$, then $f_H(e):=M_{11}^{c(v_1)c(v_2)}$;
				\item If $h(v_1)=e,h(v_2)\ne e$, then $f_H(e):=M_{12}^{c(v_1)}$;
				\item If $h(v_1)\ne e,h(v_2)= e$, then $f_H(e):=M_{21}^{c(v_2)}$;
				\item If $h(v_1)\ne e,h(v_2)\ne e$, then $f_H(e):=M_{22}$.
			\end{itemize} 
		\end{multicols}
		We aim to check derivability of the sequent $f_H(H)\to s$. Its derivation from bottom to top starts with the rule $(\times\to)$ applied $|E_H|$ times to all types in the antecedent. It turns out that, after such applications of $(\times\to)$, the antecedent of a sequent includes one edge labeled by $Q_1$, one edge labeled by $Q_3$, and the remaining edges labeled by $Q_2$; besides, for each node there is exactly one edge attached to it (this is satisfied by the definition of the function $h$). Then we apply (again from bottom to top) the rule $(\div\to)$, and using it we ``reduce'' the only $Q_1$-labeled edge (recall that $Q_1=p$) with a $Q_2$-labeled edge; after this we obtain a new $p$-labeled edge and repeat the procedure. Thus we eliminate all nodes and edges one-by-one. Finally, we obtain a graph with two nodes, with a $Q_3$-labeled edge attached to the first one and a $p$-labeled edge attached to the second one. Applying $(\div\to)$ once more, we ``contract'' $Q_3$ with $p$ and obtain the sequent $s^\bullet \to s$, which is an axiom.
	\end{proof}
	Therefore, we have established that HL-grammars are stronger than HRGs and that they moreover disobey the pumping lemma introduced in \cite{Drewes17}.
	\subsection{Bipartite graphs}
	Another example is the language $\mathcal{L}_2\subseteq \mathcal{L}_1$ of all bipartite binary graphs without isolated nodes. 
	\begin{definition}
		A binary graph $H$ is \emph{bipartite} if its nodes can be divided into two disjoint subsets $V_1$ and $V_2$ in such a way that each edge of $H$ outgoes from a node belonging to $V_1$ to a node belonging to $V_2$.
	\end{definition}
	
	Let us define the following types (where $p,q$ are primitive, $rank(p)=rank(q)=1$):
	\begin{multicols}{2}
		\begin{itemize}
			\item $R_1(r):=r$;
			\item $R_2(r):=r\div\left(\mbox{
				{\tikz[baseline=.1ex]{
						\node[] (R) {};
						\node[node,below=-1mm of R,label=below:{\scriptsize $(1)$}] (N1) {};
						\node[hyperedge,above left = 3mm and 2mm of N1] (E1) {\$};
						\node[hyperedge,above right = 3mm and 2mm of N1] (E2) {$r$};
						\draw[-,black] (E1) -- node[below left] {\scriptsize 1} (N1);
						\draw[-,black] (E2) -- node[below right] {\scriptsize 1} (N1);
				}}
			}\right)$;
			\item $R_3(r):=r\div\left(\mbox{
				{\tikz[baseline=.1ex]{
						\node[] (R) {};
						\node[node,below=1mm of R,label=right:{\scriptsize $(1)$}] (N1) {};
						\node[hyperedge,above = 3mm of N1] (E1) {\$};
						\node[node,right=10mm of N1] (N2) {};
						\node[hyperedge,above = 3mm of N2] (E2) {$r$};
						\draw[-,black] (E1) -- node[left] {\scriptsize 1} (N1);
						\draw[-,black] (E2) -- node[left] {\scriptsize 1} (N2);
				}}
			}\right)$;
			\item $R_4(r):=r\div\left(\mbox{
				{\tikz[baseline=.1ex]{
						\node[] (R) {};
						\node[node,below=-1mm of R,label=below:{\scriptsize $(1)$}] (N1) {};
						\node[hyperedge,above left = 3mm and 2mm of N1] (E1) {\$};
						\node[hyperedge,above right = 3mm and 2mm of N1] (E2) {$r$};
						\draw[-,black] (E1) -- node[below left] {\scriptsize 1} (N1);
						\draw[-,black] (E2) -- node[below right] {\scriptsize 1} (N1);
						\node[node,right=13mm of N1] (N22) {};
						\node[hyperedge,above = 3mm of N22] (E22) {$r$};
						\draw[-,black] (E22) -- node[left] {\scriptsize 1} (N22);
				}}
			}\right)$;
			\item $M^{ij}:=\times\left(\mbox{
				{\tikz[baseline=.1ex]{
						\node[] (R) {};
						\node[node,below=-1mm of R,label=below:{\scriptsize $(1)$}] (N1) {};
						\node[hyperedge,above = 3mm of N1] (E1) {$\,R_i(p)\,$};
						\node[node,right=10mm of N1,label=below:{\scriptsize $(2)$}] (N2) {};
						\node[hyperedge,above = 3mm of N2] (E2) {$\,R_j(q)\,$};
						\draw[-,black] (E1) -- node[left] {\scriptsize 1} (N1);
						\draw[-,black] (E2) -- node[left] {\scriptsize 1} (N2);
				}}
			}\right),\quad 1\le i,j\le 4;$
			\item $S:=\times\left(\mbox{
				{\tikz[baseline=.1ex]{
						\node[] (R) {};
						\node[node,below=1mm of R] (N1) {};
						\node[hyperedge,above = 3mm of N1] (E1) {$p$};
						\node[node,right=10mm of N1] (N2) {};
						\node[hyperedge,above = 3mm of N2] (E2) {$q$};
						\draw[-,black] (E1) -- node[left] {\scriptsize 1} (N1);
						\draw[-,black] (E2) -- node[left] {\scriptsize 1} (N2);
				}}
			}\right).$
		\end{itemize}
	\end{multicols}
	We define $HGr_2:=\langle\{\ast\},S,\corr \rangle$ as follows: $\ast\corr M^{ij}$ for all $1\le i,j\le 4$.
	\begin{proposition}\label{prop_lan_bipartite}
		$\mathcal{L}_2=L(HGr_2)$.
	\end{proposition}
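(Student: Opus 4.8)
The plan is to prove the two inclusions of $\mathcal{L}_2=L(HGr_2)$ separately, following closely the proof of Theorem~\ref{prop_lan_all}. The genuinely new feature is that the two primitive types $p$ and $q$ now encode the two colour classes $V_1,V_2$ of a bipartition, so the reduction proceeds on the two colour sides separately, the four ``roles'' $R_1,\dots,R_4$ governing how rank-$1$ pendants of one colour are merged.

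For $\mathcal{L}_2\subseteq L(HGr_2)$ I would argue constructively, exactly as for $\mathcal{L}_1$. Given $H\in\mathcal{L}_2$, fix a bipartition $V_H=V_1\sqcup V_2$ with every edge going from its first attachment node (in $V_1$) to its second (in $V_2$), fix enumerations $u_1,\dots,u_m$ of $V_1$ and $w_1,\dots,w_n$ of $V_2$, and — using that $H$ has no isolated node, hence every node has an incident edge — pick at each $u_k$ one \emph{special} incident edge and likewise at each $w_l$. Relabel $f_H(e)=M^{ij}$ for $att_H(e)=u_kw_l$, where $i=1$ if $e$ is special at $u_k$ and $k=m$, $i=3$ if $e$ is special at $u_k$ and $k<m$, $i=2$ otherwise, and symmetrically for $j$ from $w_l$. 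Then I derive $f_H(H)\to S$ bottom-up: (i) apply $(\times\to)$ once to each of the $|E_H|$ edges, so that each $M^{ij}$-edge becomes a rank-$1$ edge labelled $R_i(p)$ at its first node and one labelled $R_j(q)$ at its second node — by the choice of $f_H$, every $u_k$ now carries exactly one pendant of type $p=R_1(p)$ (if $k=m$) or $R_3(p)$ (if $k<m$) and all other pendants of type $R_2(p)$, dually for the $w_l$; (ii) contract the $p$-side by $(\div\to)$-applications, processing $u_m,\dots,u_1$ in turn, where an $R_2(p)$-pendant and a $p$-pendant at the same node merge to one $p$-pendant and an $R_3(p)$-pendant at $u_k$ absorbs (and deletes) a node carrying a lone $p$-pendant; processing $u_k$ ($k<m$) first absorbs the already-reduced $u_{k+1}$ and then merges away its remaining $R_2(p)$-pendants, so after $u_1$ the whole $p$-side is a single $p$-handle; (iii) do the same on the $q$-side. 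The antecedent is now $p^\bullet\sqcup q^\bullet$, i.e. the hypergraph inside $S$ with its two edges replaced by $p^\bullet,q^\bullet$, so a single $(\to\times)$ with the axioms $p^\bullet\to p$, $q^\bullet\to q$ closes the derivation. (In this direction $R_4$ is not needed: reducing each colour side along a path means each node absorbs only its predecessor.)

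For $L(HGr_2)\subseteq\mathcal{L}_2$ I would first observe, verbatim as in Theorem~\ref{prop_lan_all}, that every graph of $L(HGr_2)$ is binary, external-node-free and isolated-node-free, since the denominators of the types in $\ts(HGr_2)$ have no isolated nodes and only $(\div\to),(\to\times)$ can create them. Bipartiteness I would get from the L-model semantics via Theorem~\ref{th_correctness}. Take $\Sigma=\{\ast\}$ and let $w(p)$ be all rank-$1$ graphs that are bipartite with their external node in $V_1$ and $w(q)$ all rank-$1$ graphs bipartite with their external node in $V_2$ (other primitives mapped to $\emptyset$). Using that the one-node graph lies in $w(p)\cap w(q)$, and that gluing two bipartite graphs at their external nodes or taking a disjoint union preserves bipartiteness and the class of the external node, one gets $\overline{w}(R_i(p))=w(p)$ and $\overline{w}(R_j(q))=w(q)$ for all $1\le i,j\le4$; hence $\overline{w}(M^{ij})$ is the single set $\mathcal{M}=\{G_1\sqcup G_2\mid G_1\in w(p),\ G_2\in w(q)\}$ for all $i,j$, and $\overline{w}(S)\subseteq\{\text{bipartite graphs}\}$. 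If $H$ were non-bipartite yet $f_H$ witnessed $H\in L(HGr_2)$, then substituting into $f_H(H)$, for each edge $e$ with $att_H(e)=v_1v_2$, the element of $\mathcal{M}$ made of one $\ast$-edge from $v_1$ to a fresh node together with one $\ast$-edge from a fresh node to $v_2$ yields a graph $\widetilde H\in\overline{w}(\times(f_H(H)))$ — the ``edge-splitting'' of $H$ — which is bipartite iff no node of $H$ is both a first and a second attachment node, i.e. iff $H$ is bipartite; so $\widetilde H\notin\overline{w}(S)$, contradicting Theorem~\ref{th_correctness}.

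The routine but lengthy part is the construction in the first inclusion: the points needing care are that ``no isolated nodes'' really does supply a special incident edge at each node, and that processing each colour side as a path keeps the invariant ``after $u_k$ is processed it carries exactly one $p$-pendant'', so that $(\div\to)$ never gets stuck. The main obstacle, and the one I expect to demand the most thought, is in the second inclusion: finding the right valuation and checking that $\overline{w}(R_i(p))$ collapses to $w(p)$ for all four roles — this collapse is precisely what makes the four roles semantically interchangeable and thereby forces the partition of $V_H$ into first- and second-attachment nodes to be an honest bipartition as soon as $f_H(H)\to S$ is derivable.
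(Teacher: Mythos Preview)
Your proof is correct. The first inclusion $\mathcal{L}_2\subseteq L(HGr_2)$ follows the same strategy as the paper (and as Theorem~\ref{prop_lan_all}): both construct explicit derivations by unpacking the $M^{ij}$ via $(\times\to)$ and then contracting the resulting rank-$1$ pendants with $(\div\to)$. Your careful bookkeeping with the enumeration $u_1,\dots,u_m$ and the roles $R_1,R_2,R_3$ is more detailed than what the paper sketches, but the idea is the same.

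For the second inclusion $L(HGr_2)\subseteq\mathcal{L}_2$, however, you take a genuinely different route. The paper argues \emph{syntactically}: it invokes reversibility (Proposition~\ref{prop_reversibility}) to strip the outer $\times$ from every $M^{ij}$, and then analyses the possible shapes of $(\div\to)$ to show that an $R_i(p)$-pendant and an $R_j(q)$-pendant can never end up attached to the same node in an antecedent; this directly forces the partition of $V_H$ into first- and second-attachment nodes to be a bipartition. You instead argue \emph{semantically} via the L-models of Section~\ref{sec_model}: your valuation $w$ is chosen so that $\overline{w}(R_i(p))$ collapses to $w(p)$ for every $i$ (and likewise for $q$), whence $\overline{w}(M^{ij})$ is independent of $i,j$; correctness (Theorem~\ref{th_correctness}) then lets you substitute the ``edge-splitting'' element of $\mathcal{M}$ and read off bipartiteness of $H$ from bipartiteness of $\widetilde H\in\overline{w}(S)$. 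Both arguments are sound. Yours avoids the case analysis of how $(\div\to)$ can fire and shows nicely that the completeness machinery of Section~\ref{sec_model} is already useful for concrete grammar computations; the paper's argument, by contrast, is self-contained (it does not need Theorem~\ref{th_correctness}) and makes the combinatorial obstruction---$p$-pendants and $q$-pendants never meet---explicit rather than hiding it inside a valuation.
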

	The proof of this proposition is divided into two parts. It straightforward to show that $\mathcal{L}_2\subseteq L(HGr_2)$ by deriving sequents corresponding to graphs from $\mathcal{L}_2$. To prove the other inclusion, we use Proposition \ref{prop_reversibility} and then notice that there is no way for two hyperedges, one of which is labeled by $R_i(p)$ and the other one is labeled by $R_j(q)$, to be attached to the same node in the antecedent (to prove this, it suffices to analyze variants of how the rule $(\div \to)$ can be applied).
	\subsection{Regular graphs}
	A less trivial example of a hypergraph language generated by a HL-grammar and by no HRGs is the language of regular binary graphs.
	\begin{definition}
		A binary graph $H$ is \emph{regular} if there is an integer $k\ge 1$ such that the indegree and the outdegree of each node equals $k$.
	\end{definition}
	Let $\mathcal{L}_3\subseteq \mathcal{L}_1$ be the language of all regular binary graphs (without the empty graph).
	\begin{theorem}\label{th_lan_regular}
		$\mathcal{L}_3$ can be generated by some HL-grammar.
	\end{theorem}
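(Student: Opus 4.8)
The plan is to build $HGr_3$ in the same spirit as $HGr_1$ and $HGr_2$: a grammar $\langle\{\ast\},S,\corr\rangle$ over the single binary label $\ast$, where $\ast$ is put in correspondence with finitely many ``product'' types $\times(M)$, each $M$ describing the \emph{local role} that an edge $e$ with $att(e)=v_1v_2$ plays at its two endpoints (at the source $v_1$ it is an outgoing incidence, at the target $v_2$ an incoming one), together with — as in the construction for $\mathcal{L}_1$ — an auxiliary ``owner'' function $h\colon V_H\to E_H$ picking one edge per node so that the $(\div\to)$-reduction can be organised into a single chain. Unpacking all the products by repeated $(\times\to)$ (legitimate by Proposition~\ref{prop_reversibility}) should leave, at every node $v$, a fixed collection of rank-$1$ hyperedges: markers for the $d^+(v)$ outgoing incidences and the $d^-(v)$ incoming ones (taken in two independent copies, see below), plus the owner hyperedge. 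The succedent $S$ and the division types $R_i(\cdot)$ are then to be designed so that $\mathrm{HL}\vdash f_H(H)\to S$ holds exactly when $H$ is $k$-regular for some $k\ge 1$.

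First I would treat the \emph{completeness} inclusion $\mathcal{L}_3\subseteq L(HGr_3)$. Given a $k$-regular $H$, fix $h$, fix a linear order $v_1,\dots,v_m$ on all of $V_H$ (irrespective of connectivity), and read off $f_H$. The derivation of $f_H(H)\to S$, read from the axioms downward, first performs at each node a \emph{balance check}: a block of $(\div\to)$ steps pairing the outgoing-incidence markers of $v_j$ against its incoming ones, which closes only if $d^+(v_j)=d^-(v_j)$. It then threads a \emph{unary counter} along $v_1,\dots,v_m$: a ``$p$-list'' of length equal to the common degree is carried from node to node, at each $v_j$ consumed against the second copy of $v_j$'s incidences and re-emitted with the same length, with a final $(\div\to)$ step closing the thread by matching the list emitted past $v_m$ against the copy left pending at $v_1$. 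Closing the thread cyclically forces all the $d^+(v_j)$ (equivalently $d^-(v_j)$) to be equal, which is exactly why one value $k$ suffices; and $k\ge1$ is automatic since $H$ has no isolated nodes. The only reason a bounded amount of type machinery is enough is that $k$ is never named — it is materialised on the fly as the length of the carried list.

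For the \emph{soundness} inclusion $L(HGr_3)\subseteq\mathcal{L}_3$ I would argue as in the proof of Proposition~\ref{prop_lan_bipartite}: the denominators of the types in $\ts(HGr_3)$ contain neither isolated nodes nor external nodes, so no graph in $L(HGr_3)$ has them; and, using Proposition~\ref{prop_reversibility} to strip the outer products, one reduces to classifying which $(\div\to)$ rewrites the unpacked antecedent admits. The case analysis should show that the balance-check blocks can be completed only when $d^+(v)=d^-(v)$ at every node, and the threading chain only when all these common degrees coincide with the length of the carried list, i.e.\ only when the graph was already $k$-regular. The main obstacle, and the part deserving the most care, is precisely this global synchronisation: with finitely many types one must guarantee that the \emph{same} $k$ is verified at every node while still permitting $k$ arbitrarily large and the graph disconnected. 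Making the threaded counter simultaneously ``re-usable'' (so it survives each node) and ``closed'' (so the node-degrees are pinned to one another, not merely linked pairwise) without admitting any non-regular graph is where the real work lies; duplicating the incidence markers at each node — one copy for the per-node balance check, one for the threading — seems to be the cleanest way to make both happen.
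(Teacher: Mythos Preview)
Your proposal is a plan rather than a proof, and the gap is exactly where you locate it yourself: the global synchronisation. You describe a ``threaded unary counter'' carried from node to node and closed cyclically, but you never exhibit the division types that implement it, nor do you carry out the soundness case analysis you promise. In $\mathrm{HL}$ the only way information moves between two nodes is through a hyperedge touching both, so your length-$k$ list (with $k$ unbounded) must be materialised as an actual subgraph of the antecedent and then consumed and re-emitted by explicit types; saying that duplicating the incidence markers ``seems to be the cleanest way'' is not yet a construction. In particular, nothing in your sketch rules out the counter being partially consumed at one node and topped up at the next from that node's own incidences, which would let unequal degrees slip through. For disconnected $H$ you also need to explain how the thread jumps between components, since no original edge links them.

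The paper avoids building this mechanism by hand. It invokes Theorem~\ref{th_finite_intersections} (HL-grammars generate finite intersections of HRG languages) and reduces to two HRG-generable languages of auxiliary ``flowerbed'' hypergraphs. Each binary edge is wrapped in a product type that deposits a rank-$1$ marker at its source, a rank-$3$ marker at its target, and optionally a rank-$2$ link between them; unpacking the products yields a flowerbed whose $i$-th bouquet records the out- and in-degree of the $i$-th node. One HRG language equates bouquet sizes at positions $2k,2k{+}1$, the other at positions $2k{-}1,2k$; their intersection forces all bouquets equal, i.e.\ regularity. Thus the unbounded-counter bookkeeping is pushed to the HRG side, where it is routine, and to the already-proved intersection theorem. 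Your direct route may be workable, but as written the central construction is missing; if you pursue it, you will essentially be re-proving a special case of Theorem~\ref{th_finite_intersections} inside your grammar.
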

	To prove this theorem, we need the following result proved in \cite{Pshenitsyn21_2}:
	\begin{theorem}\label{th_finite_intersections}
		If $L_1,\dots,L_k$ are languages generated by some HRGs, then $L_1\cap\dotsc\cap L_k$ can be generated by some HL-grammar.
	\end{theorem}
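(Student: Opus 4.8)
The plan is to reduce the $k$-fold intersection to iterated binary intersections and to prove, as the key lemma, that whenever $L$ is generated by an HL-grammar and $L'$ by an HRG, the language $L\cap L'$ is again generated by an HL-grammar. Granting this lemma, the theorem follows by induction on $k$: the base case $k=1$ turns $L_1$ into an HL-grammar via Theorem~\ref{th_hrg_hlg}, and each inductive step intersects the HL-language $L_1\cap\dots\cap L_{j-1}$ (an HL-grammar by the induction hypothesis) with the HRG-language $L_j$. Since both Theorem~\ref{th_hrg_hlg} and Theorem~\ref{WGNF} presuppose isolated-node boundedness, I would first reduce to that case by a standard preprocessing that factors the isolated nodes of each graph away from its edge-bearing part; this is the kind of \emph{nonsubstantial} restriction already flagged after Theorem~\ref{th_hrg_hlg}, and it leaves the main argument untouched.

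For the binary lemma I would put the HRG for $L'$ into the weak Greibach normal form (Theorem~\ref{WGNF}) and read each of its productions through the \$-conversion of Section~\ref{sec_HLG}: a production $\pi$ whose single terminal edge carries the label $a$ becomes a correspondence $a\corr X_\pi\div D_\pi$, where $X_\pi$ is the left-hand nonterminal (as a primitive type) and $D_\pi$ is the right-hand side with the terminal edge replaced by \$ and the remaining nonterminals kept as type labels. The construction then \emph{overlays} the two grammars on one terminal graph: to each label $a$ I assign composite types that bundle, for every admissible pair consisting of an HL-type $T$ with $a\corr T$ in the grammar for $L$ and a production-type $X_\pi\div D_\pi$ of the normalized HRG, a single type built from $\times$ and $\div$ carrying both pieces of bookkeeping at once. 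The intended behaviour, read bottom-up, is that a derivation of the combined start sequent simultaneously runs the $(\div\to)$-reductions that reconstruct an HRG-parse of $G$ in $L'$ and the reductions that witness $G\in L$, both acting on the \emph{same} node set of $G$. It is exactly this ability of $\mathrm{HL}$ to re-parse one terminal graph in two incompatible ways --- impossible for a single HRG, since intersection leaves the context-free world --- that makes the intersection expressible.

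Correctness splits into the two inclusions. For soundness I would take any derivation of the combined sequent and, using cut elimination (Theorem~\ref{th_cut}) together with the reversibility of $(\times\to)$ and $(\to\div)$ (Proposition~\ref{prop_reversibility}), peel the overlaid structure apart into an independent $\mathrm{HL}$-derivation certifying $G\in L$ and an independent $(\div\to)$-reduction certifying $G\in L'$, whence $G$ lies in both. For completeness I would start from a witness derivation in each of the two systems for the same $G$, interleave their reductions, and check that the resulting relabeling is legal (the rank conditions of Definition~\ref{def_type} hold because $\times$ and $\div$ preserve ranks) and that the interleaved reductions compose into a single valid $\mathrm{HL}$-derivation of the combined start sequent.

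The main obstacle is the \emph{tight coupling} between the two overlaid derivations. The construction is only correct if the bookkeeping cannot be abused: the two parses must be forced to agree on one and the same terminal graph, with no independent re-choice of nodes or edges, and the auxiliary hyperedges introduced for the $L$-check must not be usable to forge a spurious $L'$-reduction or vice versa. Establishing this requires a careful invariant on the shapes of antecedents reachable in proof search --- essentially a normal-form analysis of $\mathrm{HL}$-derivations that separates the two layers --- and this is where the bulk of the work lies; by comparison the inductive passage from the binary lemma to arbitrary $k$, and the isolated-node preprocessing, are routine.
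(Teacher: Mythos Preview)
The paper does not prove Theorem~\ref{th_finite_intersections}. It is stated there only as a result imported from \cite{Pshenitsyn21_2} and then used as a black box in the proof of Theorem~\ref{th_lan_regular}. There is therefore no in-paper proof against which your proposal can be compared.

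As to the proposal itself: the high-level strategy --- normalize the HRG to weak Greibach form, encode each of its productions as a $\div$-type via the \$-conversion, and then overlay this encoding with the existing HL-types using $\times$ so that a single derivation simultaneously witnesses membership in both languages --- is a reasonable line of attack and is consonant with the way the paper builds the grammars in Sections~\ref{sec_lan_all}--\ref{sec_HLG}. One structural choice worth flagging: by proving the stronger binary lemma ``HL $\cap$ HRG is HL'' and iterating, you are committing to an inductive scheme in which at each step one side is an \emph{arbitrary} HL-grammar rather than an HRG. That makes the overlay harder, because the HL side need not admit the clean $(\div\to)$-only parse structure that a Greibach-normalized HRG provides; your coupling invariant will have to cope with $(\to\div)$, $(\to\times)$ and nested $\times$ on the HL side. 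A direct $k$-ary construction that overlays $k$ Greibach-normalized HRGs in one shot avoids this asymmetry and may well be what \cite{Pshenitsyn21_2} does; either way, the part you have correctly identified as ``the bulk of the work'' --- the invariant ensuring the two (or $k$) parses are forced to agree on the same underlying terminal graph and cannot cross-contaminate --- is genuinely the crux, and your sketch does not yet give a concrete mechanism for it.
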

	Less formally, this means that HL-grammars can generate finite intersections of languages generated by HRGs.
	\begin{definition}
		Let $MS_1,\dots, MS_n$, $n\ge 1$ be some multisets with elements from $C^\prime\subseteq C$ (that is, they are multisets of labels from $C^\prime$). Let $b\in C\setminus C^\prime$ be some symbol with $rank(b)=2$. We denote $MS_i=\{a_i^1,\dots,a_i^{k_i}\}$, and $rank(a_i^j)=t_i^j>0$. \emph{A flowerbed} $\FB(MS_1,\dots,MS_n,b)$ over $C^\prime$ is the hypergraph \\ $\langle V,E,att,lab,ext\rangle$ where
		\begin{enumerate}
			\item $V=\{ v_i^{jk} \mid i=1,\dots, n,\; j=1,\dots, k_i,\; k = 1,\dots, t_i^j-1\} \cup \{u_1,\dots, u_n\}$;
			\item $E=\{e_i^j \mid i=1,\dots, n,\; j=1,\dots, k_i\}\cup \{f_1,\dots, f_{n-1}\}$;
			\item 
			\begin{enumerate}
				\item $att(e_i^j)=u_iv_i^{j1}v_i^{j2}\dots v_i^{j(t_i^j-1)}$ (if $t_i^j=1$, then $att(e_i^j)=u_i$);
				\item $att(f_k)=v_kv_{k+1}$;
			\end{enumerate}
			\item 
			\begin{enumerate}
				\item $lab(e_i^j)=a_i^j$;
				\item $lab(f_k)=b$;
			\end{enumerate}
			\item $ext=\Lambda$.
		\end{enumerate}
	\end{definition}
	Informally, a flowerbed is a string graph $\SG(b\dotsc b)$ ($b$ repeated $n$ times) without external nodes such that several hyperedges of different ranks can be additionally attached to its nodes (but only the first attachment node of a hyperedge belongs to this string graph; the remaining nodes must be attached only to this hyperedge).
	\begin{definition}
		Given a multiset $MS$, $|MS|^a$ denotes the number of occurences of $a$ in $MS$.
	\end{definition}
	\begin{proof}[Proof (of Theorem \ref{th_lan_regular})]
		Let $C^\prime=\{a,z\}$ with $rank(a)=1$, $rank(z)=3$. We set $\Sigma=\{a,z,b\}$ with $rank(b)=2$. Let us introduce the following languages:
		\begin{itemize}
			\item $L_1$ is the set of all flowerbeds over $C^\prime$ of the form $\FB(MS_1, \dots, MS_n, b)$ such that $|MS_{2k}|^a=\\=|MS_{2k+1}|^a=|MS_{2k}|^z=|MS_{2k+1}|^z$ for $k=1,\dotsc, \lfloor \frac{n-1}{2} \rfloor$
			\item $L_2$ is the set of all flowerbeds over $C^\prime$ of the form $\FB(MS_1, \dots, MS_n, b)$ such that $|MS_{2k-1}|^a=|MS_{2k}|^a=|MS_{2k-1}|^z=|MS_{2k}|^z$ for $k=1,\dotsc,\lfloor \frac n2 \rfloor$
		\end{itemize}
		It is left as an exercise to prove that $L_1$ and $L_2$ can be generated by some HRGs; thus, according to Theorem \ref{th_finite_intersections}, $L=L_1\cap L_2$ can be generated by an HL-grammar. Let us denote such a grammar $HGr=\langle \Sigma,S,\triangleright\rangle$: $L(HGr)=L$. Note that $L$ is the set of all flowerbeds over $C^\prime$ of the form $\FB(MS_1, \dots, MS_n, b)$ such that $|MS_{k}|^a=|MS_{k+1}|^a=|MS_{k}|^z=|MS_{k+1}|^z$ for $k=1,2,\dots,n-1$.
		
		Let us denote all types corresponding to $a$ via $\triangleright$ as $A_i$ (i.e. $a\corr A_i$), all types corresponding to $z$ as $Z_j$, and all types corresponding to $b$ as $B_k$. Let
		$$
		T_{ij}\eqdef \times\left(\mbox{
			{\tikz[baseline=.1ex]{
					\node[] (R) {};
					\node[node,below=-1mm of R,label=below:{\scriptsize $(1)$}] (N1) {};
					\node[hyperedge,above = 3mm of N1] (E1) {$\,A_i\,$};
					\node[node,right=15mm of N1,label=below:{\scriptsize $(2)$}] (N2) {};
					\node[hyperedge,above = 3mm of N2] (E2) {$\,Z_j\,$};
					\draw[-,black] (E1) -- node[left] {\scriptsize 1} (N1);
					\draw[-,black] (E2) -- node[left] {\scriptsize 1} (N2);
			}}
		}\right),\qquad
		T_{ijk}\eqdef \times\left(\mbox{
			{\tikz[baseline=.1ex]{
					\node[] (R) {};
					\node[node,below=-1mm of R,label=below:{\scriptsize $(1)$}] (N1) {};
					\node[hyperedge,above = 3mm of N1] (E1) {$\,A_i\,$};
					\node[node,right=15mm of N1,label=below:{\scriptsize $(2)$}] (N2) {};
					\node[hyperedge,above = 3mm of N2] (E2) {$\,Z_j\,$};
					\draw[-,black] (E1) -- node[left] {\scriptsize 1} (N1);
					\draw[-,black] (E2) -- node[left] {\scriptsize 1} (N2);
					\draw[->,black] (N1) -- node[above] {$B_k$} (N2);
			}}
		}\right).
		$$
		Using these types we define an HL-grammar $\widetilde{HGr}\eqdef \langle\{\ast\},S,\widetilde{\corr}\rangle$ as follows: $\ast \;\widetilde{\corr}\; T_{ij}, T_{ijk}$ for all possible $i$, $j$, $k$. We argue that $L(\widetilde{HGr})=L_3$. Indeed, $H\in L(\widetilde{HGr})$ if and only if there exists a relabeling $f_H$ such that $lab_H(e)\;\widetilde{\corr}\; f_H(e)$, and $\mathrm{HL}\vdash f_H(H)\to S$. Labels in $f_H(H)$ are types $T_{ij}$ and $T_{ijk}$, which are of the form $\times(M)$. Using Proposition \ref{prop_reversibility}, we draw the conclusion that $\mathrm{HL}\vdash f_H(H)\to S$ if and only if $\mathrm{HL}\vdash\hat{H}\to S$ where $\hat{H}$ is obtained from $f_H(H)$ by replacing each hyperedge labeled by a type of the form $\times(M)$ by $M$. $\hat{H}$ is labeled by types $A_i$, $Z_j$, and $B_k$. Note that for all $i,j,k$ $A_i\ne Z_j$, $Z_j\ne B_k$, $A_i\ne B_k$ since $type(A_i)=1$, $type(B_k)=2$, $type(Z_j)=3$. Let $g:E_{\hat{H}}\to \Sigma$ be such a function that $g(e)=a$ if $lab_{\hat{H}}(e)=A_i$, $g(e)=b$ if $lab_{\hat{H}}(e)=B_k$, and $g(e)=z$ if $lab_{\hat{H}}(e)=Z_j$. Since $\mathrm{HL}\vdash\hat{H}\to S$, $g(\hat{H})$ belongs to $L(HGr)=L$. To complete the proof, observe that the number of $a$-labeled edges attached to a node in $g(\hat{H})$ equals the outdegree of this node in $H$, and the number of $z$-labeled edges attached to a node in $g(\hat{H})$ equals the indegree of this node in $H$; according to the definition of $L$, this number is the same for all nodes.
		
		Formally, in the above reasonings we made a one-way transition when we introduced $g$; hence, we only proved that $\widetilde{HGr}$ generates regular binary graphs. However, given a regular binary graph $H$, we can construct graphs of the form $f_H(H)$, $\hat{H}$, and $g(\hat{H})$ corresponding to it and then repeat the above reasonings.
	\end{proof}
	\begin{remark}
		Consider the language $L_3$ of all flowerbeds over $C^\prime$ of the form $\FB(MS_1, \dots, MS_n, b)$ such that $|MS_{1}|^a=n-1$. This language can also be generated by some HRG (this is left as an exercise to the reader). If we defined $L$ in the above proof as $L_1\cap L_2 \cap L_3$, then $L(\widetilde{HGr})$ would consist of all regular binary graphs with $n$ nodes such that the indegree and the outdegree of each node equals $n-1$. Note that numbers of edges in graphs of $L(\widetilde{HGr})$ in such a case form the set $\left\{\frac{n(n-1)}{2}\mid n\ge 2\right\}$, which grows with the pace $O(n^2)$ (this violates the Linear-Growth theorem, see \cite{Habel92}, Chapter IV.2).
	\end{remark}
	
	\section{Conclusion}\label{sec_conclusion}
	Hypergraph Lambek grammars are a logical formalism, which extend hyperedge replacement grammars. They deal with hypergraph types and sequents, which have a model-theoretic semantics of hypergraph languages. We showed that HL-grammars are more powerful than HRGs; in particular, they violate the pumping lemma and the Linear-Growth  theorem. Note that, since they generate the language of all graphs, they are able to generate languages of unbounded treewidth. This can be considered as a disadvantage since we cannot use algorithms for languages of bounded treewidth. However, our goal was rather to show that $\mathrm{HL}$-grammars are much more powerful than HRGs, which cannot be obtained for free.
	
	Note that the membership problem for HL-grammars is NP-complete: this follows from the fact that, if $H$ belongs to $L(HGr)$ for $HGr=\langle\Sigma,S,\triangleright\rangle$, then this can be certified by a function $f_H$ and by a derivation of $f_H(H)\to S$. Description of $f_H$ and the derivation have polynomial size with respect to $H$ and $HGr$, hence the problem is in NP. It is NP-complete since HL-grammars can generate an NP-complete language (which can be generated by some HRG without isolated nodes). Therefore, being equal in complexity to HRGs, HL-grammars represent a promising instrument for generating hypergraph languages.
	
	As is often the case, there are more questions than answers. Some of them are listed below:
	\begin{enumerate}
		\item Is it true that, if $\overline{w}(H\to A)$ is true for all valuations, then  $\mathrm{HL}\vdash H\to A$ (for sequents that include types with $\times$)?
		\item Do HL-grammars generate the language of (a) complete binary graphs? (b) planar binary graphs? (c) directed acyclic binary graphs?
		\item What string languages can be generated by HL-grammars?
		\item Is the class of languages generated by HL-grammars closed under intersections?
		\item What nontrivial necessary properties (like the pumping lemma for HRGs) exist for languages generated by HL-grammars?
		\item Can HL-grammars be embedded in some known kind of graph grammars?
	\end{enumerate}
	We are interested in further and deeper study of generalizations of logical approaches and concepts to as graphs; we think that this allows one to better understand the nature of the considered notions.
	\section*{Acknowledgments}
		I am grateful to my scientific advisor Mati Pentus for his careful attention to my studies. I also thank anonymous reviewers for their substantial and valuable advice.
\bibliographystyle{eptcs}
\bibliography{GBLHL_Post_Procs}

\end{document}